\newcommand{\MAPF}{\textsc{Multiagent Path Finding}\xspace}
\newcommand{\MAPFShort}{\textsc{MAPF}\xspace}
\newcommand{\MAPFMA}{\textsc{Multiagent Path Finding with Malfunctioning Agents}\xspace}
\newcommand{\MAPFMAShort}{\textsc{MAPFMA}\xspace}
\newcommand{\scheduleLength}{\ensuremath{\mu}}
\newcommand{\protocolOne}{\textsc{Check Before Moving}\xspace}
\newcommand{\protocolOneShort}{\textsc{CBM}\xspace}
\newcommand{\protocolTwo}{\textsc{Upgraded Check Before Moving}\xspace}
\newcommand{\protocolTwoShort}{\textsc{UCBM}\xspace}
\newcommand{\protocolThree}{\textsc{Check Counter Before Moving}\xspace}
\newcommand{\protocolThreeShort}{\textsc{CCBM}\xspace}
\renewcommand*{\ALG@name}{Protocol}
\newtheorem{theorem}{Theorem}
\newtheorem{example}{Example}
\newtheorem{claim}{Claim}
\newenvironment{proofclaim}{\noindent{\em Proof of the claim.}}{\qedclaim}
\newcommand{\qedclaim}{\hfill $\diamond$ \medskip}
\newif\iflong
\newif\ifshort
\begin{document}

\title{When Agents Break Down in Multiagent Path Finding}

\author[1]{Foivos Fioravantes}
\author[1]{Dušan Knop}
\author[2]{Nikolaos Melissinos}
\author[1]{Michal Opler}
\affil[1]{Department of Theoretical Computer Science, Faculty of Information Technology, Czech Technical University in Prague, Prague, Czech Republic}
\affil[2]{Computer Science Institute, Faculty of Mathematics and Physics, Charles University}
\date{}
\maketitle

\begin{abstract}
In \MAPF (\MAPFShort{}), the goal is to compute efficient, collision-free paths for multiple agents navigating a network from their sources to targets, minimizing the schedule’s makespan-the total time until all agents reach their destinations. We introduce a new variant that formally models scenarios where some agents may experience delays due to malfunctions, posing significant challenges for maintaining optimal schedules.

Recomputing an entirely new schedule from scratch after each malfunction is often computationally infeasible. To address this, we propose a framework for dynamic schedule adaptation that does not rely on full replanning. Instead, we develop protocols enabling agents to locally coordinate and adjust their paths “on the fly.” We prove that following our primary communication protocol, the increase in makespan after 
$k$ malfunctions is bounded by $k$ additional turns, effectively limiting the impact of malfunctions on overall efficiency. Moreover, recognizing that agents may have limited computational capabilities, we also present a secondary protocol that shifts the necessary computations onto the network’s nodes, ensuring robustness without requiring enhanced agent processing power. Our results demonstrate that these protocols provide a practical, scalable approach to resilient multiagent navigation in the face of agent failures.

\end{abstract}

\section{Introduction}

The classical \MAPF{} (\MAPFShort) problem focuses on computing non-colliding paths for a set of $k$ agents. Finding an optimal schedule—minimizing either the makespan or the total energy (i.e., the total distance traveled by all agents) -- is known to be computationally hard~\cite{ma2022graph,FKKMO24,FKKMOV25}. Even heuristic approaches struggle with scalability and performance~\cite{SharonSFS15}. In this work, we initiate the study of \MAPF{} in the presence of faulty agents. Specifically, we consider scenarios where a scheduled agent may unexpectedly pause its movement, incurring a delay of up to $d \in \mathbb{N}$ time steps. Such interruptions can invalidate the original schedule by introducing collisions. Two obvious remedies are: (1) recomputing the entire schedule from the agents’ current positions, or (2) pausing the execution of all agents until the delayed one can resume. While both approaches resolve conflicts, they come with significant drawbacks: recomputation may be computationally infeasible in real time, and global pausing requires coordinated communication among all agents. Our goal is to analyze the effects of such malfunctions and propose decentralized recovery strategies that preserve as much of the original schedule as possible.

Traditional \MAPFShort{} algorithms assume perfectly reliable agents that follow their assigned schedules without deviation.
However, in real-world deployments -- such as warehouse robotics, autonomous drones, or multi-robot exploration -- agents may experience delays due to hardware faults, temporary obstacles, or energy constraints.
In such settings, requiring all agents to pause or triggering a global replanning procedure is often impractical due to latency, communication limitations, or computational overhead (think of, e.g., traffic control).
This motivates the need for local recovery mechanisms that allow the system to tolerate occasional agent delays with minimal disruption.
We propose a framework for decentralized schedule repair, where only agents directly affected by a delay participate in resolving potential conflicts.
Our approach leverages the structure of the original schedule and local interactions to avoid global rescheduling, enabling more resilient and scalable multi-agent coordination.

To this end, we propose the \MAPFMA{} problem, where agents may pause for one or more turns. We observe that even with a single turn delay, conflicts can arise if other agents stick to the original schedule, especially under worst-case, adversarial conflict resolution—potentially leading to scenarios with no feasible solutions (see Example~\ref{ex:infiniteDelay}).

In this work, we analyze protocols to propagate information about delays. The most straightforward approach is to broadcast delay information and to suspend all agents until the malfunctioning agent resumes. Although this guarantees a conflict-free schedule, it introduces a uniform delay for all agents, which is often inefficient in practice. Furthermore, broadcasting may not be feasible in all settings, and such a protocol can unnecessarily delay unaffected agents.

\subsection{Our Results}
We begin by observing that adapting a given schedule in a centralized way is infeasible. In particular, we show deciding in a centralized way how to adapt the initial schedule so that its makespan is unaffected, is computationally hard (see Theorem~\ref{thm:keeping:the:same:makespan:is:NP:hard}). 

We then focus on the case of a single delay (i.e., a delay of exactly one turn), we propose the 
\protocolOne protocol, a natural and intuitive approach that prioritizes agents that have already experienced a delay. Specifically, this protocol grants the delayed agent priority access to the next vertex, ensuring conflict resolution without requiring global coordination. Surprisingly, as we show in Theorem~\ref{thm:delay-one}, this simple strategy guarantees a conflict-free schedule and increases the overall makespan by only one turn, which is clearly necessary (e.g., if the delayed agent is at distance equal to the original makespan from its target).

Moreover, this protocol is highly localized: before entering a vertex, an agent inspects its immediate neighborhood. If a delayed agent is detected in the vicinity, the agent pauses its own movement and signals the delay in the following turn. This localized mechanism minimizes communication overhead and leverages only local interactions to resolve conflicts.

Extending this protocol to delays of $k$ turns introduces further complexity. In such cases, the delay signal must propagate up to $k$ hops, and agents may need to scan a neighborhood of radius $k$ to respond appropriately. Although conceptually simple, this requirement implies that agents must be explicitly designed to handle delays up to $k$. Unexpected longer delays (e.g. $k+1$) can break the protocol guarantees, leading to potential deadlocks or unsafe behaviors.

To address such robustness concerns, we introduce a second protocol inspired by the ``Hansel and Gretel'' paradigm. Here, agents leave markers (such as pebbles or stickers) on vertices they visit. Each agent knows how many agents are expected to traverse each vertex on its planned path, and it scans only its adjacent vertex to determine when it may safely proceed. This lightweight marking system encodes minimal but sufficient information along the path and is based solely on local observations.

Despite its simplicity, the second protocol ensures a makespan increase of at most~$k$, in the presence of $k$ faulty agents, as we show in Theorem~\ref{thm:protocol2}. Moreover, we believe that it offers a practical balance between robustness and minimal local sensing, making it suitable for real-world multiagent systems.

\subsection{Related Work}
\MAPF (\MAPFShort) is a fundamental and practically motivated problem with a wide variety of algorithmic frameworks and real-world applications~\cite{ma2022graph}; see also surveys~\cite{stern2019,amigoni2022}.

The idea of reusing previously computed solutions has been a subject of interest across planning domains. In fact, early work in general planning showed that reusing a plan can, paradoxically, be more complex than computing a new one from scratch~\cite{NebelK95}. 

Motivated by real-world uncertainty, several variants of \MAPFShort{} have been proposed to handle malfunctions or execution delays. One notable model is $k$-\textsc{Robust} \MAPFShort, where the solution must tolerate up to $k$ delays at arbitrary times and agents~\cite{AtzmonSFWBZ20}. Another related model assumes that agents occupy multiple vertices simultaneously (e.g. due to physical size), affecting space-time occupancy constraints~\cite{Li_Surynek_Felner_Ma_Kumar_Koenig_2019}. These models are typically solved using adaptations of Conflict-Based Search (CBS)~\cite{SharonSFS15}, which remains a central technique in \MAPFShort{} literature.

A similar problem is the so-called conformant planning~\cite{CimattiRB04,Bonet10,HoffmannB06}, where the source of each agent cannot be accurately measured.
Due to the inherent hardness of these problems and the impossibility to predict delays and malfunctions, many researchers focus on probabilistic solutions that work well in expectation or with high probability~\cite{MaKK17,AtzmonSFSK20}.
Similarly to us, recently, Kottinger et al.~\cite{KottingerGASL24} studied how to add delays to a schedule so that it becomes conflict-free after some agent is delayed.
They discovered that computing the minimum number of delays is \textsf{APX-hard}.
This is in a strong contrast with our work where we focus on the global makespan and not on the sum over all agents.

As was already pointed out~\cite{AtzmonSFWBZ20,NebelK95} it may be even more challenging to find a solution that reuses as much as possible from an old solution.
This was previously observed in some other settings, for example, in the case of stable marriage and stable roommates~\cite{BredereckCKLN20}, where it turns out to be challenging to find a stable matching that is close to the original (after one agent ``resolves'' a single tie in their preferences).
We can observe the same phenomenon in graph representations where it might be significantly harder to find a representation as some part of the desired representation is given along with the input~\cite{chaplick2014contact,chaplick2018partial}.

Another aspect that is close to our work is the one of distributed computing~\cite{garg2002elements,kshemkalyani2011distributed}.
This was previously studied in the context of \MAPF{}~\cite{Pianpak19}.
Ma et al.~\cite{maLM21} investigated distributed coordination and the effect of communication.

\section{Preliminaries}
We begin by formally defining the \MAPF{} problem. As input, we receive a graph $G=(V,E)$, a set of agents~$A$, two functions $s_0\colon A \rightarrow V$, $t\colon A \rightarrow V$ and a positive integer~$\ell$, known as the \textit{makespan}. 
For any pair $a , b \in A$ where $a \neq b$, we have that $s_0(a) \neq s_0(b)$ and $t(a) \neq t(b)$.
Initially, each agent $a \in A$ is placed on $s_0(a)$.
At specific time intervals, called \textit{turns}, the agents are allowed to move to a vertex neighboring their position in the previous turn, without being obliged to do so.
The agents can make at most one move per turn, and each vertex can host at most one agent at a given turn.
The \textit{intended position} of the agents at the end of the turn $i$ (after the agents have moved) is given by an injective function $s_i\colon A \rightarrow V$, $i\in[\scheduleLength]$.
Any sequence $\sigma=s_0, s_1, \ldots, s_\scheduleLength$ is called a \textit{schedule} of \textit{length} $\scheduleLength$. If, moreover, $\sigma$ also respects the above rules (collision free), then we call it a \textit{feasible schedule}.
We also briefly mention that there are two main variants of the classical \MAPFShort{} problem, according to whether swaps are allowed or not. A \textit{swap} is the behavior when two agents start from adjacent positions and exchange them within one turn. In this work, we do not allow swaps. Given an instance $\langle G, A, s_0, t, \ell\rangle$ of \MAPFShort{}, we are tasked with finding a feasible schedule of length $\ell$.

In this work, we introduce the \MAPFMA{} 
(\MAPFMAShort{}) problem, a version of \MAPFShort{} that aims to capture scenarios where agents may malfunction. 
Before we formally define this problem, we first need to define what we mean by malfunctioning and, more generally, delaying. 

Given a schedule $\sigma$ of length $\ell$, we will say that the agents of $A'\subseteq A$ perform (or participate in) a \textit{delay-$1$} operation in turn $i$ if each agent $a\in A'$ follows a new schedule $\sigma'(a)=s'_0(a),\ldots,s'_{\ell+1}(a)$ such that:
\begin{itemize}
    \item $s'_j(a) = s_j(a)$, for all $j <i$, and 
    \item $s'_j(a) = s_{j-1}(a)$, for all $j \ge i$.
\end{itemize}
To improve the flow, we may also say that the set $A'$ \textit{delays-$1$}. In essence, the agents of $A'$ will follow the initial schedule, with the difference that they will spend one extra turn at one vertex (each agent on a different vertex). 
We now have two sets of agents, the ones who follow $\sigma$ and the ones who follow $\sigma'$. Technically, this poses an issue on the definition of the notion of schedule for all the agents of $A$. To overcome this, we amortize the schedules $\sigma$ and $\sigma'$:
\begin{itemize}
    \item If it is true that for all $a\in A'$ we have $s'_{\ell}(a) = s'_{\ell+1}(a)$, then we prune the $\ell+1$ turns and set $\sigma'=s'_0(a),\ldots,s'_{\ell}(a)$ for every $a\in A'$.
    \item Otherwise, there exists an agent $a\in A'$ such that $s'_{\ell}(a) \neq s'_{\ell+1}(a)$. In this case, we extend the schedules of all the agents $b \in A\setminus A'$ by setting $s_{\ell+1}(b) = s_{\ell}(b)$.
\end{itemize}
So, if a set $A'$ of agents performs a delay-$1$ upon the given schedule $\sigma$, then the agents of $A$ will follow the amortized schedule $\sigma'$ where $\sigma'(a)=\sigma(a)$ if $a\in A\setminus A'$ (including the $\ell+1$ turn if we are in the case of the second item above) and $\sigma'(a)= s'_0(a),\ldots,s'_{\ell+1}(a)$ if $a\in A'$ (excluding $\ell +1$ turn if are in the case of the first item above). To avoid cumbersome notations, we will henceforth assume that the resulting schedule $\sigma'$ is always amortized, unless mentioned otherwise.

Finally, we can model the scenario where some agents introduce multiple delays. This is done by recursively performing the delay-$1$ operation the corresponding number of times. Formally, we define the sequence of turns where the delays are introduced and the sets of agents that introduce a delay in each of the given turns. We assume that the delay introductions are given in the order that are (actually) happening. That is, a delay introduction in turn $i$ must appear before any delay introduction in any turn $j>i$.
In order to compute the schedule after a set of $k\ge 1$ delay-$1$ operations, we compute a sequence of schedules $\sigma^0,\sigma^1,\ldots,\sigma^k$ such that: 
\begin{itemize}
    \item $\sigma^0=\sigma$, where $\sigma$ is the initial given schedule, and 
    \item for any $i\in [k]$, the schedule $\sigma^{i}$ is obtained by applying the $i^{th}$ delay-$1$ to the schedule $\sigma^{i-1}$.
\end{itemize}

We would like to comment on two details at this point. First, performing the delay-$1$ operation upon a \textit{feasible} schedule $\sigma$ may very well result in a schedule $\sigma'$ that is no longer feasible. Second, this operation is not to be identified as the corresponding agents malfunctioning. Instead, it is meant as an operation that models deviation. In other words, any agent, malfunctioning or not, may participate in such an operation. The distinction has to do with why they do so. Indeed, the malfunctioning agents have no choice, and at some point will participate in a delay-$1$ operation, while the non-malfunctioning agents will participate in such an operation to ensure the feasibility of the resulting schedule. To avoid confusion, we may use the term \textit{malfunction-$1$} operation to denote the former case. 

We are now able to formally define \MAPFMAShort{}. As input we receive an instance $\mathcal{I}=\langle G, A, s_0, t, \ell\rangle$ of \MAPFShort, a feasible schedule $\sigma$ of length $\ell$ and two integers $k\geq 1$ and $\ell'\geq \ell$. Moreover, there is an adversary that can force $k$ malfunction-$1$ operations on the agents of $A$. Our task is to construct a feasible schedule $\sigma'$ of length $\ell'$ that includes the malfunction-$1$ operations imposed by the adversary. 

\section{No Communication}

Recall that in the \MAPFMAShort{} problem, our task is to orchestrate a set of delays, so that the resulting schedule is affected in the least way possible by the malfunctioning of one agent. In this section we argue that this orchestration is indeed crucial. In essence, we study the simplest protocol one can propose to adapt the schedule after some agents has malfunctioned: ``ignore'' the malfunctioning agents. Simply put, all the agents will try to continue traversing the network as if no malfunction has happened. In this case, it could be that (at least) two agents try to access the same vertex during the same turn. During such a turn, we give the power to assign priorities to an adversary. 

Although such a protocol would require small effort from the implementation point of view, we show that it can lead to catastrophic scenarios. 


\begin{example}[Example of infinite delay]\label{ex:infiniteDelay}
\end{example}
Consider the instance shown in Figure~\ref{fig:catastrophy-1}. Let $a_1$ and $a_2$ be the agents depicted in red and blue respectively. Clearly, the makespan in this schedule is at least $2$, and there is a (unique) schedule achieving this: $s_1(a_1)=u_1$, $s_1(a_2)=u_2$, $s_2(a_1)=u_2$, $s_2(a_2)=u_3$. Consider now what happens if $a_2$ malfunctions-$1$ at turn $1$. Then $s'_1(a_2)=u_4$,$s'_2(a_2)=u_2$ and $s'_3(a_2)=u_3$. But we also have that $s'_2(a_1)=u_2$. At this point the adversary would choose that agent $a_1$ moves to $u_2$ on the second turn, where they would stay forever. That is $s'_1(a_1)=s_1(a_1)=u_1$, $s'_2(a_1)=u_2$ and $s'_3(a_1)=u_2$. But now $u_2$ can never be reached by $a_2$, and thus $a_2$ can never reach their terminal. To sum up, with just one agent malfunctioning during one turn, we went from a schedule that had a length of $2$ to an instance that admits no feasible schedule. 

\begin{figure}[!t]
\centering

\begin{tikzpicture}[scale=0.45, inner sep=0.7mm]
\node[draw, circle, line width=1pt, fill=white](u1) at (0,0)[label=below: $u_1$] {};
\node[draw, circle, line width=1pt, fill=red](u2) at (2,0)[label=below: $u_2$] {};
\node[draw, circle, line width=1pt, fill=blue](u3) at (4,0)[label=below: $u_3$] {};
\node[draw, circle, line width=1pt, fill=white](u4) at (2,2)[label=right: $u_4$] {};

\node[above = 0.1 of u1.west] () {\includegraphics[scale=0.07]{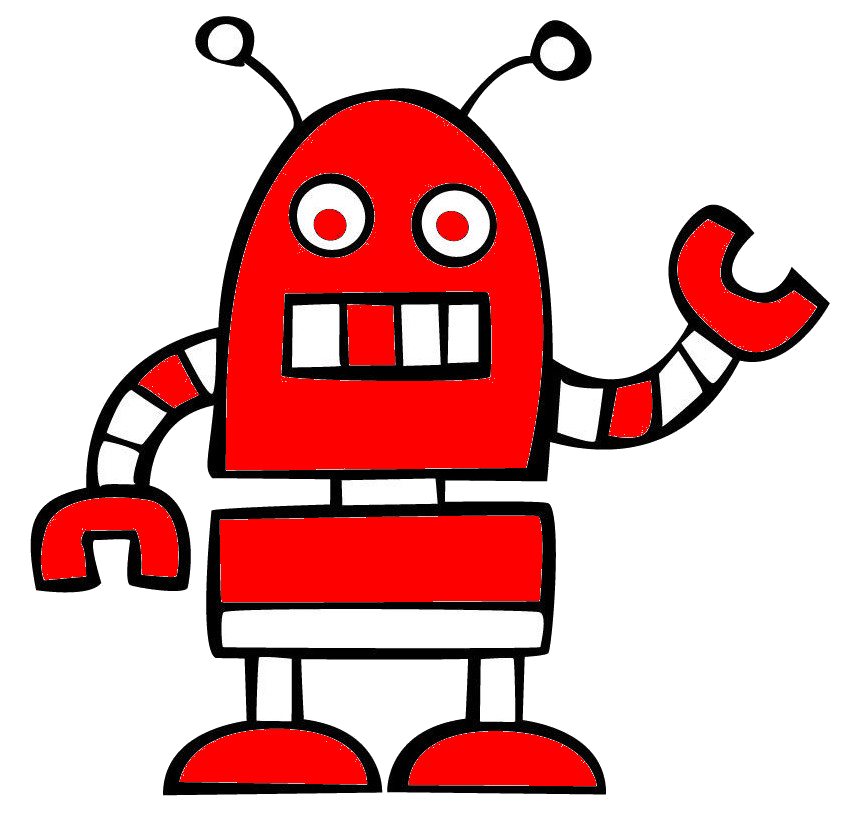}};
\node[above = 0.1 of u4.west] () {\includegraphics[scale=0.07]{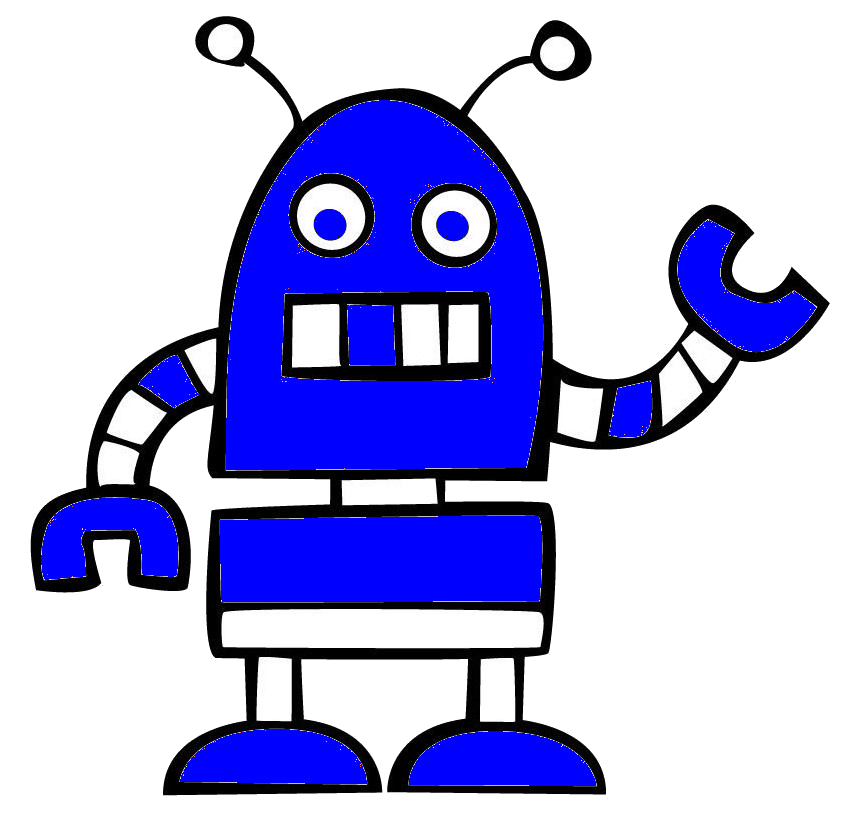}};

\draw[-, line width=0.5pt]  (u1) -- (u2);
\draw[-, line width=0.5pt]  (u2) -- (u3);
\draw[-, line width=0.5pt]  (u4) -- (u2);
\end{tikzpicture}
\caption{The initial state of the first example of catastrophic scenario. The colors on the agents and the vertices are used to encode the terminal vertex of each agent.}\label{fig:catastrophy-1}
\end{figure}

\smallskip 

However, it can be argued that the situation depicted in Figure~\ref{fig:catastrophy-1} is fairly straightforward to be identified and dealt with. In the next example, we establish that there can be far more nuanced problematic cases. 

\begin{example}[Example of large delay]
\end{example}
Consider the scenario depicted in Figure~\ref{fig:catastrophy-2}. We will say that the red blue and brown agents are the \textit{colored} ones; each colored agent has as terminal the vertex colored by their color. Let $a_1,a_2$ and $a_3$ denote the red, blue, and brown agents, respectively. The gray vertices are called \textit{critical}; let $c_1,c_2$ and $c_3$ denote the critical vertices that are neighbors of $v_1,v_2$ and $v_3$, respectively. Any path that starts from black vertices, goes through the closest critical vertex and then goes on to reach the vertices with the closest black agents is called an \textit{auxiliary} path. Lets focus on one auxiliary path, say the one that crosses $c_3$. The black agents of this path have as terminals the black vertices of the same path: the bottom (top resp.) black agent has as terminal the left (right resp.) black vertex of this path. The terminals of the other black agents are defined analogously. 

Given a feasible schedule $\sigma$ of makespan $\ell$ for this instance, we claim that there is a way to perform a delay-$1$ operation on one agent which results in any feasible schedule $\sigma'$ having length at least $\ell+2$. 

Let us first consider $\sigma$. It is clear that $\ell\geq 9$, as the colored agents are at distance $9$ from their terminals. We can actually design a schedule that achieves this makespan, and thus is optimal: 
\begin{itemize}
    \item For every turn $i\in[9]$, every colored agent moves towards their terminal following the unique shortest path that leads them there.
    \item For every turn $i\in[5]$ the black agents move towards their terminals on their respective auxiliary paths. 
    \item In turn $i=6$ the black agents wait on their current vertices (the two vertices preceding the corresponding critical vertex in each auxiliary path). 
    \item For every turn $i\in[7,9]$ the black agents continue towards their terminals. 
\end{itemize}
This is clearly a feasible schedule. Observe first that the colored agents are on the triangle during the turns $4$ and $5$. In particular, we have that $s_4(a_1)=v_1$, $s_4(a_2)=v_2$, $s_4(a_3)=v_3$ and $s_5(a_1)=v_3$, $s_5(a_2)=v_1$, $s_5(a_3)=v_2$. Observe also that the critical vertices are occupied by the black agents during the turns $7$ and $8$. 

\begin{figure}[!t]
\centering

\begin{tikzpicture}[scale=0.45, inner sep=0.6mm]

\node[draw, circle, line width=1pt, fill=white](t1) at (4,4)[label=below: $v_1$] {};
\node[draw, circle, line width=1pt, fill=white](t2) at (7,4)[label=right: $v_2$] {};
\node[draw, circle, line width=1pt, fill=white](t3) at (5.5,7)[label=right: $v_3$] {};

\draw[-, line width=0.5pt]  (t1) -- (t2);
\draw[-, line width=0.5pt]  (t2) -- (t3);
\draw[-, line width=0.5pt]  (t3) -- (t1);

\node[draw, circle, line width=1pt, fill=white](l1) at (0,0)[] {};
\node[draw, circle, line width=1pt, fill=white](l2) at (1,0)[] {};
\node[draw, circle, line width=1pt, fill=white](l3) at (2,0)[] {};
\node[draw, circle, line width=1pt, fill=white](l4) at (3,0)[] {};
\node[draw, circle, line width=1pt, fill=white](l5) at (3,1)[] {};
\node[draw, circle, line width=1pt, fill=white](l6) at (3,2)[] {};
\node[draw, circle, line width=1pt, fill=white](l7) at (3,3)[] {};
\node[draw, circle, line width=1pt, fill=gray!70](l8) at (3,4)[] {};
\node[draw, circle, line width=1pt, fill=black](l9) at (3,5)[] {};
\node[draw, circle, line width=1pt, fill=black](l10) at (3,6)[] {};
\node[draw, circle, line width=1pt, fill=blue](l11) at (0,4)[] {};
\node[draw, circle, line width=1pt, fill=white](l12) at (1,4)[] {};
\node[draw, circle, line width=1pt, fill=white](l13) at (2,4)[] {};

\draw[-, line width=0.5pt]  (l1) -- (l2);
\draw[-, line width=0.5pt]  (l2) -- (l3);
\draw[-, line width=0.5pt]  (l3) -- (l4);
\draw[-, line width=0.5pt]  (l4) -- (l5);
\draw[-, line width=0.5pt]  (l5) -- (l6);
\draw[-, line width=0.5pt]  (l6) -- (l7);
\draw[-, line width=0.5pt]  (l7) -- (l8);
\draw[-, line width=0.5pt]  (l8) -- (l9);
\draw[-, line width=0.5pt]  (l9) -- (l10);
\draw[-, line width=0.5pt]  (l11) -- (l12);
\draw[-, line width=0.5pt]  (l12) -- (l13);
\draw[-, line width=0.5pt]  (l13) -- (l8);
\draw[-, line width=0.5pt]  (l8) -- (t1);

\node[above = 0.1 of l11.west] () {\includegraphics[scale=0.07]{robot-red.png}};
\node[above = 0.1 of l1.west] () {\includegraphics[scale=0.06]{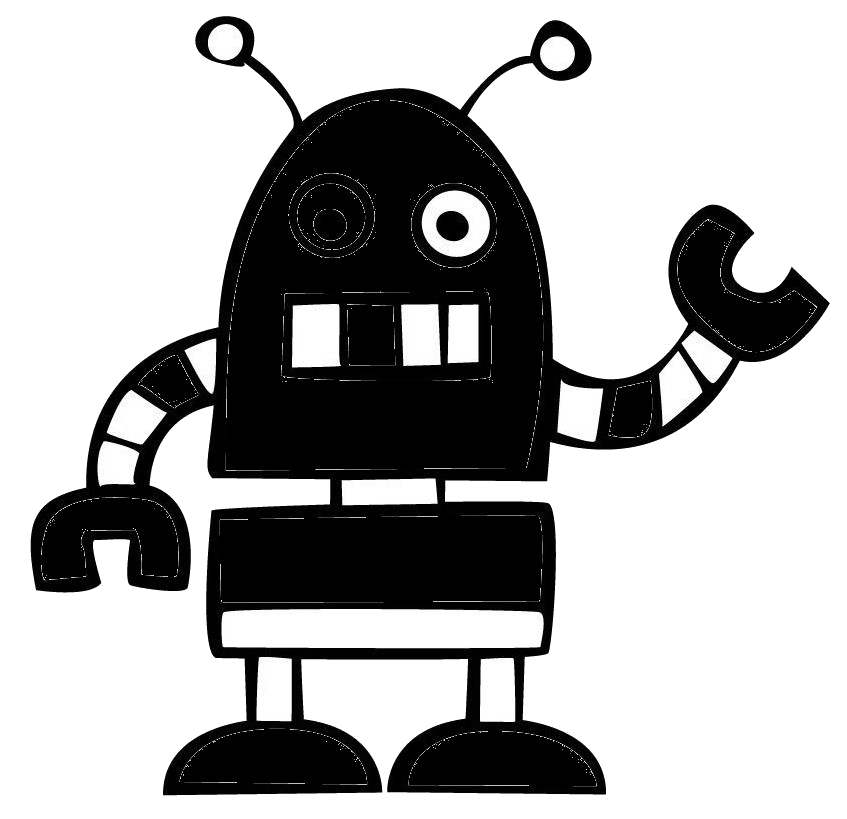}};
\node[above = 0.1 of l2.west] () {\includegraphics[scale=0.06]{robot-black.png}};


\node[draw, circle, line width=1pt, fill=white](r1) at (11,0)[] {};
\node[draw, circle, line width=1pt, fill=white](r2) at (11,1)[] {};
\node[draw, circle, line width=1pt, fill=white](r3) at (11,2)[] {};
\node[draw, circle, line width=1pt, fill=white](r4) at (11,3)[] {};
\node[draw, circle, line width=1pt, fill=white](r5) at (10,3)[] {};
\node[draw, circle, line width=1pt, fill=white](r6) at (9,3)[] {};
\node[draw, circle, line width=1pt, fill=white](r7) at (8,3)[] {};
\node[draw, circle, line width=1pt, fill=gray!70](r8) at (7,3)[] {};
\node[draw, circle, line width=1pt, fill=black](r9) at (6,3)[] {};
\node[draw, circle, line width=1pt, fill=black](r10) at (5,3)[] {};
\node[draw, circle, line width=1pt, fill=brown](r11) at (7,0)[] {};
\node[draw, circle, line width=1pt, fill=white](r12) at (7,1)[] {};
\node[draw, circle, line width=1pt, fill=white](r13) at (7,2)[] {};

\draw[-, line width=0.5pt]  (r1) -- (r2);
\draw[-, line width=0.5pt]  (r2) -- (r3);
\draw[-, line width=0.5pt]  (r3) -- (r4);
\draw[-, line width=0.5pt]  (r4) -- (r5);
\draw[-, line width=0.5pt]  (r5) -- (r6);
\draw[-, line width=0.5pt]  (r6) -- (r7);
\draw[-, line width=0.5pt]  (r7) -- (r8);
\draw[-, line width=0.5pt]  (r8) -- (r9);
\draw[-, line width=0.5pt]  (r9) -- (r10);
\draw[-, line width=0.5pt]  (r11) -- (r12);
\draw[-, line width=0.5pt]  (r12) -- (r13);
\draw[-, line width=0.5pt]  (r13) -- (r8);
\draw[-, line width=0.5pt]  (r8) -- (t2);

\node[right = 0.2 of r11.west] () {\includegraphics[scale=0.07]{robot-blue.png}};
\node[right = 0.2 of r1.west] () {\includegraphics[scale=0.06]{robot-black.png}};
\node[right = 0.2 of r2.west] () {\includegraphics[scale=0.06]{robot-black.png}};


\node[draw, circle, line width=1pt, fill=white](u1) at (9.5,11)[] {};
\node[draw, circle, line width=1pt, fill=white](u2) at (9.5,10)[] {};
\node[draw, circle, line width=1pt, fill=white](u3) at (9.5,9)[] {};
\node[draw, circle, line width=1pt, fill=white](u4) at (9.5,8)[] {};
\node[draw, circle, line width=1pt, fill=white](u5) at (8.5,8)[] {};
\node[draw, circle, line width=1pt, fill=white](u6) at (7.5,8)[] {};
\node[draw, circle, line width=1pt, fill=white](u7) at (6.5,8)[] {};
\node[draw, circle, line width=1pt, fill=gray!70](u8) at (5.5,8)[] {};
\node[draw, circle, line width=1pt, fill=black](u9) at (4.5,8)[] {};
\node[draw, circle, line width=1pt, fill=black](u10) at (3.5,8)[] {};
\node[draw, circle, line width=1pt, fill=red](u11) at (5.5,11)[] {};
\node[draw, circle, line width=1pt, fill=white](u12) at (5.5,10)[] {};
\node[draw, circle, line width=1pt, fill=white](u13) at (5.5,9)[] {};

\draw[-, line width=0.5pt]  (u1) -- (u2);
\draw[-, line width=0.5pt]  (u2) -- (u3);
\draw[-, line width=0.5pt]  (u3) -- (u4);
\draw[-, line width=0.5pt]  (u4) -- (u5);
\draw[-, line width=0.5pt]  (u5) -- (u6);
\draw[-, line width=0.5pt]  (u6) -- (u7);
\draw[-, line width=0.5pt]  (u7) -- (u8);
\draw[-, line width=0.5pt]  (u8) -- (u9);
\draw[-, line width=0.5pt]  (u9) -- (u10);
\draw[-, line width=0.5pt]  (u11) -- (u12);
\draw[-, line width=0.5pt]  (u12) -- (u13);
\draw[-, line width=0.5pt]  (u13) -- (u8);
\draw[-, line width=0.5pt]  (u8) -- (t3);

\node[right = 0.2 of u11.west] () {\includegraphics[scale=0.07]{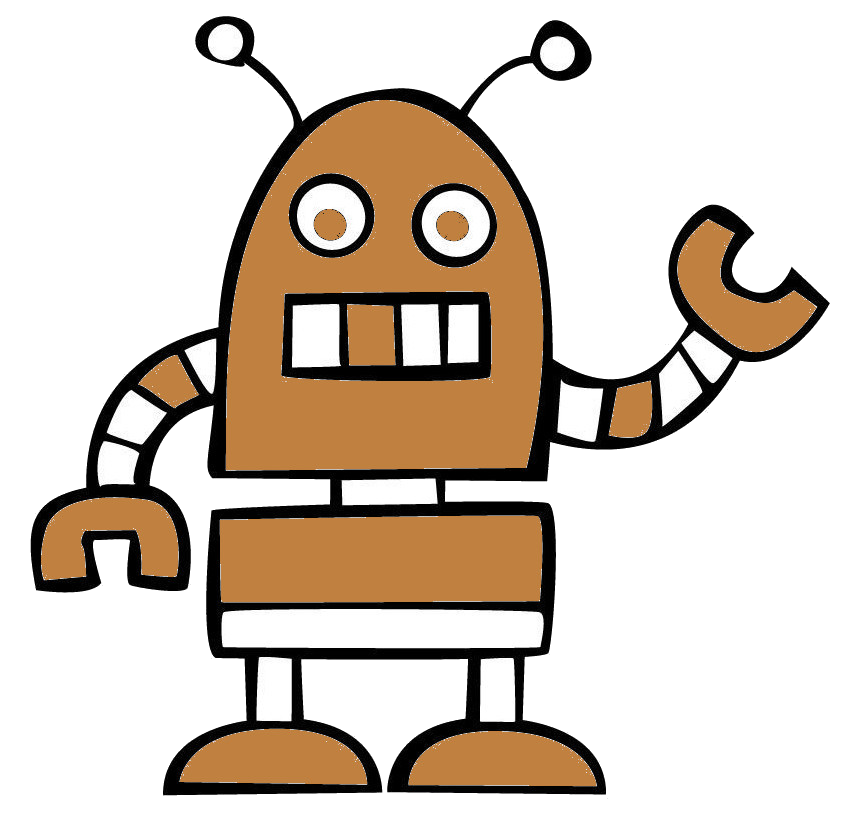}};
\node[right = 0.2 of u1.west] () {\includegraphics[scale=0.06]{robot-black.png}};
\node[right = 0.2 of u2.west] () {\includegraphics[scale=0.06]{robot-black.png}};

\end{tikzpicture}
\caption{The initial state of the second example of catastrophic scenario. The colors on the agents and the vertices are used to encode the terminal vertex of each agent. The color gray is used to denote the critical vertices.}\label{fig:catastrophy-2}
\end{figure}

Let us consider now what happens if $a_1$ performs a malfunction-$1$ operation in turn $4$. Then $s'_4(a_1)=s'_5(a_1)=v_1$. But we also have that $s'_5(a_2)=s_5(a_2)=v_1$. Thus, for $\sigma'$ to be feasible, we are obliged to have agent $a_2$ perform a delay-$1$ operation in turn $4$. With similar arguments, the same holds true for $a_3$. At this stage, we have that $s'_5(a_i)=v_i$ and $s'_6(a_i)=v_{i\mod{3}+1}$. Let us continue by focusing on the agent $a_1$ (the behavior of the other colored agents is symmetrical). We have that $s'_6(a_1)=c_3$. Let $b_1$ and $b_2$ denote the bottom and top, respectively, black agents of the auxiliary path that passes through $c_3$, and observe that $s'_6(b_1)=c_3$. At this point, the adversary may chose to prioritize $b_1$. This forces $a_1$ to perform another delay-$1$ operation in turn $6$. This situation is again repeated during turn $7$, this time with $a_1$ and $b_2$ both wanting to access $c_1$. Thus, $s'_7(a_1)=v_3$. After this point $a_1$ is free to move towards their target, which they will reach in turn $11$. 

To sum up, by having one agent malfunction for one turn, we ended up with a feasible schedule that has a length that is longer than the optimal one by $2$. Moreover, it is easy to see that for any $k\geq 2$, we can construct a scenario where having one agent malfunction for one turn would result in any feasible schedule having a length that is longer than the optimal by $k$. To achieve this, it suffices to carefully extend the auxiliary paths, the number of black agents (and black terminals) and the paths connecting the starting positions of the colored agents to the corresponding critical vertices.

\smallskip

Actually, the situation is even worse. Indeed, we are able to show that even acknowledging that there is a malfunction, deciding in a centralized way how to adapt the initial schedule so that its makespan is unaffected, is hopeless.  

\ifshort
\input{np-hardness-short}
\fi
\iflong

\begin{theorem}\label{thm:keeping:the:same:makespan:is:NP:hard}
    Let $\mathcal{I}=\langle G, A, s_0, t \rangle$ be an instance of \MAPFShort{} and $\sigma$ be a schedule of $\mathcal{I}$ length $\ell$. 
    It is \NP-hard to decide if there exists a set of delay-$1$ operations that can be performed after a malfunction-$1$ operation occurs, so that the resulting schedule is of length $\ell$. This holds even if $G$ is a planar graph of maximum degree $10$.
\end{theorem}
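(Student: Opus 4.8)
The plan is to establish \NP-hardness by reduction from a planar, bounded-occurrence variant of \textsc{3-Sat} (every variable appearing in a bounded number of clauses), which yields both planarity and the constant degree bound almost for free: the variable--clause incidence graph of such a formula already admits a planar layout, so the gadgets can be placed without any crossover construction. We only need the hardness direction. The conceptual engine of the reduction is the notion of \emph{slack}: for the adapted schedule to still have length $\ell$, every agent must reach its target by turn~$\ell$, so an agent whose trajectory in $\sigma$ already consumes all $\ell$ turns cannot be delayed at all, whereas an agent arriving early can absorb a bounded number of delay-$1$ operations. A single malfunction-$1$ injects one unit of delay, and the only freedom available is \emph{which} agent yields (performs a delay-$1$) whenever the shifted trajectory of an already-delayed agent would collide with another agent at a vertex. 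These binary yield-choices, as they propagate through the schedule, are what encode a Boolean assignment.

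The gadgets I would build are as follows. A \emph{wire} is a corridor of zero-slack agents moving in lockstep: a delay-$1$ forced at one end must be passed along the whole corridor, since any agent refusing to yield would collide and, lacking slack, overshoot turn~$\ell$; thus a wire faithfully transmits ``a delay arrives here at turn~$t$.'' Importantly the cascade is not conserved but proliferating -- one delayed agent whose shifted trajectory meets several others forces each of them to yield -- which lets a single injected delay \emph{fan out} to all occurrences of a literal. A \emph{variable} gadget is a branching vertex at which the incoming delay can be handed to one of two outgoing wire-bundles but not both: routing it both ways would require yielding two distinct zero-slack agents that share a common deadline, breaking the makespan, so the choice of bundle encodes $\mathrm{true}$ versus $\mathrm{false}$, and feeding all positive (resp.\ negative) literal-wires from the same branch forces this choice to be globally consistent. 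A \emph{clause} gadget is a meeting point of its three literal-wires equipped with a single slack-$1$ ``sink'' agent and zero-slack blockers, arranged so that the arriving delay can be absorbed by the sink exactly when it is delivered along a wire of a satisfied literal; if no literal delivers a delay, a blocker is forced to yield and the length rises to $\ell+1$. Choosing all path lengths against one uniform deadline~$\ell$ makes every intended collision occur precisely at the turn its gadget expects.

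With these gadgets the equivalence would be proved in the usual two directions: from a satisfying assignment, route the delay at each variable along the dictated bundle, check that every clause then absorbs its delay through a satisfied literal, and verify by construction that no agent is ever delayed beyond its slack, so the final schedule is feasible of length~$\ell$; conversely, any family of delay-$1$ operations preserving length~$\ell$ must make a single consistent routing choice at each variable (by the zero-slack argument) and must satisfy every clause (else a blocker overshoots), which reads off a satisfying assignment. The main obstacle I anticipate is not the logical skeleton but the quantitative bookkeeping: one must fix all distances and the global deadline so that the proliferating cascade reaches every gadget at exactly the right turn \emph{simultaneously} across the whole instance, while keeping every vertex of degree at most~$10$ and the layout planar. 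The delicate case analysis will concentrate on showing the slack budgets are tight -- that a clause can be ``paid for'' if and only if it is satisfied, with no accidental absorption that would let an unsatisfied instance sneak through and no accidental overflow that would spuriously reject a satisfiable one.
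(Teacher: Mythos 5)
Your overall architecture (slack budgets, delay propagation from a single malfunction, binary yield-choices encoding an assignment) is the right family of ideas and is close in spirit to the paper's reduction, but two of your gadgets rely on mechanisms that cannot work in this model. First, your wires are ``corridors of zero-slack agents'' that ``pass the delay along the whole corridor.'' Passing a delay along a corridor means every agent in it performs a delay-$1$ operation; a zero-slack agent that delays can never reach its target by turn~$\ell$, so a wire of zero-slack agents does not transmit a delay --- it simply makes any incoming delay fatal. Transmission requires the wire agents themselves to have slack: in the paper, the cascade runs through the red agents, which have large slack, while the zero-slack blocking agents serve only as immovable obstacles that pin down timings. The same inconsistency undermines your variable gadget: you argue that routing the delay into both bundles is impossible because it ``would require yielding two distinct zero-slack agents,'' but by that reasoning routing it into either bundle would already be impossible.

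Second, and more fundamentally, your clause gadget has no mechanism behind its failure mode. You claim that if no literal delivers a delay, ``a blocker is forced to yield and the length rises to $\ell+1$.'' In this model an agent is only ever forced to delay because some already-delayed agent's shifted trajectory collides with it; if no delay reaches a clause gadget, the original feasible schedule simply continues there, nothing is forced, an unsatisfied clause goes undetected, and unsatisfiable formulas would still admit length-$\ell$ adaptations. The missing ingredient is the crux of the paper's proof: every checker (one $\lambda_i$ and one $\tau_i$ per variable, one $\kappa_j$ per clause) is itself \emph{pre-delayed} by the fan-out of the single malfunction, and can re-synchronize without disturbing the zero-slack blockers only if a gap has been opened for it --- a gap that exists precisely when the appropriate slack-$1$ variable agent has delayed. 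This turns ``clause satisfied'' into ``pre-delayed checker can be rescued,'' a monotone condition: extra true literals only create extra gaps and are harmless. Your absorption-based design has the opposite monotonicity --- with a single slack-$1$ sink per clause, a clause with two or three true literals receives more delays than it can absorb --- so the ``accidental overflow'' you flag as mere bookkeeping is actually fatal to the stated construction. Note finally that consistency per variable also needs enforcement in both directions (at least one and at most one of the pair delays), which the paper achieves with two separate pre-delayed checkers and which your single branching vertex cannot provide as described.
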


\begin{proof}

\begin{figure}[!t]
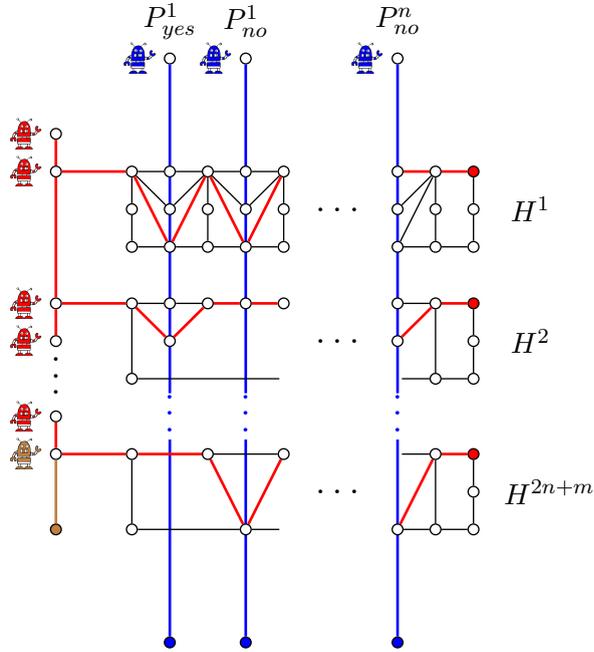

\centering

\begin{tikzpicture}[scale=0.5, inner sep=0.5mm]
\tikzstyle{vertex}=[draw, circle, line width=.5pt]
\tikzstyle{blueVertex}=[vertex, fill=blue, color=blue]

\node[vertex, fill=blue](e01) at (3,-3)[] {};
\node[draw, circle, line width=0.5pt, fill=blue](e02) at (5,-3)[] {};
\node[draw, circle, line width=0.5pt, fill=blue](e03) at (9,-3)[] {};

\node[](e11) at (3,2.5)[] {};
\node[](e12) at (5,2.5)[] {};
\node[](e13) at (9,2.5)[] {};

    \draw[-, line width=1pt, color=blue]  (e11) -- (e01);
    \draw[-, line width=1pt, color=blue]  (e12) -- (e02);
    \draw[-, line width=1pt, color=blue]  (e13) -- (e03);

\node[draw, circle, line width=0.5pt, fill=brown](u00) at (0,0)[] {};
\node[draw, circle, line width=0.5pt, fill=white](u01) at (2,0)[] {};
\node[draw, circle, line width=0.5pt, fill=white](u02) at (5,0)[] {};
\node[](u03) at (6,0)[] {};
\node[draw, circle, line width=0.5pt, fill=white](u04) at (9,0)[] {};
\node[draw, circle, line width=0.5pt, fill=white](u05) at (10,0)[] {};

\node[draw, circle, line width=0.5pt, fill=white](u10) at (0,2)[] {};
\node[draw, circle, line width=0.5pt, fill=white](u11) at (2,2)[] {};
\node[draw, circle, line width=0.5pt, fill=white](u12) at (4,2)[] {};
\node[draw, circle, line width=0.5pt, fill=white](u13) at (6,2)[] {};
\node[](u14) at (9,2)[] {};
\node[draw, circle, line width=0.5pt, fill=white](u15) at (10,2)[] {};

\draw[-, line width=1pt, color=brown]  (u00) -- (u10);
\draw[-, line width=0.5pt]  (u11) -- (u01);
\draw[-, line width=0.5pt]  (u01) -- (u02);
\draw[-, line width=0.5pt]  (u02) -- (u03);
\draw[-, line width=0.5pt]  (u04) -- (u05);
\draw[-, line width=0.5pt]  (u12) -- (u13);
\draw[-, line width=0.5pt]  (u14) -- (u15);
\draw[-, line width=0.5pt]  (u15) -- (u05);

\path (6,1) -- (9,1) node [black, midway, sloped] {\Large$\dots$};

\draw[-, line width=1pt, color=red]  (u10) -- (u11);
\draw[-, line width=1pt, color=red]  (u11) -- (u12);
\draw[-, line width=1pt, color=red]  (u12) -- (u02);
\draw[-, line width=1pt, color=red]  (u02) -- (u13);
\draw[-, line width=1pt, color=red]  (u04) -- (u15);

\node at (13,   1) {$H^{2n+m}$};

\node[draw, circle, line width=0.5pt, fill=white](u2) at (0,3)[] {};
\draw[-, line width=1pt, color=red]  (u2) -- (u10);

\node[left = 0.05 of u2] () {\includegraphics[scale=0.06]{robot-red.png}};
\node[left = 0.05 of u10] () {\includegraphics[scale=0.06]{robot-brown.png}};

\node[draw, circle, line width=0.5pt, fill=white](u06) at (11,0)[] {};
\node[draw, circle, line width=0.5pt, fill=white](w1) at (11,1)[] {};
\node[draw, circle, line width=0.5pt, fill=red](w2) at (11,2)[] {};

\draw[-, line width=0.5pt]  (u05) -- (u06);
\draw[-, line width=0.5pt]  (u06) -- (w1);
\draw[-, line width=0.5pt]  (w1) -- (w2);
\draw[-, line width=1pt, color=red]  (w2) -- (u15);

\node at (3, 13.5) {$P^1_{yes}$};
\node at (5, 13.5) {$P^1_{no}$};
\node at (9, 13.5) {$P^n_{no}$};

\node[draw, circle, line width=0.5pt, fill=white](b1) at (3,12.5)[] {};
\node[draw, circle, line width=0.5pt, fill=white](b2) at (5,12.5)[] {};
\node[draw, circle, line width=0.5pt, fill=white](b3) at (9,12.5)[] {};

\node[left = 0.05 of b1] () {\includegraphics[scale=0.06]{robot-blue.png}};
\node[left = 0.05 of b2] () {\includegraphics[scale=0.06]{robot-blue.png}};
\node[left = 0.05 of b3] () {\includegraphics[scale=0.06]{robot-blue.png}};

\node[](b4) at (3,3.5)[] {};
\node[](b5) at (5,3.5)[] {};
\node[](b6) at (9,3.5)[] {};
\draw[-, line width=1pt, color=blue]  (b1) -- (b4);
\draw[-, line width=1pt, color=blue]  (b2) -- (b5);
\draw[-, line width=1pt, color=blue]  (b3) -- (b6);

\path (b4) -- (e11) node [blue, midway, sloped] {\Large$\dots$};
\path (b5) -- (e12) node [blue, midway, sloped] {\Large$\dots$};
\path (b6) -- (e13) node [blue, midway, sloped] {\Large$\dots$};

\node[draw, circle, line width=0.5pt, fill=white](u20) at (2,4)[] {};
\node[](u21) at (6,4)[] {};
\node[](u22) at (9,4)[] {};
\node[draw, circle, line width=0.5pt, fill=white](u23) at (10,4)[] {};

\node[draw, circle, line width=0.5pt, fill=white](u30) at (3,5)[] {};
\node[draw, circle, line width=0.5pt, fill=white](u31) at (9,5)[] {};
\node[draw, circle, line width=0.5pt, fill=white](u40) at (0,6)[] {};
\node[draw, circle, line width=0.5pt, fill=white](u41) at (2,6)[] {};
\node[draw, circle, line width=0.5pt, fill=white](u42) at (3,6)[] {};
\node[draw, circle, line width=0.5pt, fill=white](u43) at (4,6)[] {};
\node[draw, circle, line width=0.5pt, fill=white](u44) at (5,6)[] {};
\node[draw, circle, line width=0.5pt, fill=white](u45) at (6,6)[] {};
\node[draw, circle, line width=0.5pt, fill=white](u46) at (9,6)[] {};
\node[draw, circle, line width=0.5pt, fill=white](u47) at (10,6)[] {};

\draw[-, line width=1pt, color=red]  (u40) -- (u41);
\draw[-, line width=1pt, color=red]  (u41) -- (u30);
\draw[-, line width=1pt, color=red]  (u30) -- (u43);
\draw[-, line width=1pt, color=red]  (u43) -- (u44);
\draw[-, line width=1pt, color=red]  (u44) -- (u45);
\draw[-, line width=1pt, color=red]  (u31) -- (u47);

\draw[-, line width=0.5pt]  (u41) -- (u20);
\draw[-, line width=0.5pt]  (u20) -- (u21);
\draw[-, line width=0.5pt]  (u22) -- (u23);
\draw[-, line width=0.5pt]  (u41) -- (u42);
\draw[-, line width=0.5pt]  (u42) -- (u43);
\draw[-, line width=0.5pt]  (u46) -- (u47);
\draw[-, line width=0.5pt]  (u47) -- (u23);

\node[draw, circle, line width=0.5pt, fill=white](u4) at (0,5)[] {};
\node[left = 0.05 of u4] () {\includegraphics[scale=0.06]{robot-red.png}};
\draw[-, line width=1pt, color=red]  (u4) -- (u40);
\path (u4) -- (u2) node [black, midway, sloped] {\Large$\dots$};
\node[left = 0.05 of u40] () {\includegraphics[scale=0.06]{robot-red.png}};
\path (6,5) -- (9,5) node [black, midway, sloped] {\Large$\dots$};
\node at (12.5,   5) {$H^2$};

\node[draw, circle, line width=0.5pt, fill=white](w3) at (11,4)[] {};
\node[draw, circle, line width=0.5pt, fill=white](w4) at (11,5)[] {};
\node[draw, circle, line width=0.5pt, fill=red](w5) at (11,6)[] {};

\draw[-, line width=0.5pt]  (u23) -- (w3);
\draw[-, line width=0.5pt]  (w3) -- (w4);
\draw[-, line width=0.5pt]  (w4) -- (w5);
\draw[-, line width=1pt, color=red]  (w5) -- (u47);

\node[draw, circle, line width=0.5pt, fill=white](u50) at (2,7.5)[] {};
\node[draw, circle, line width=0.5pt, fill=white](u51) at (3,7.5)[] {};
\node[draw, circle, line width=0.5pt, fill=white](u52) at (4,7.5)[] {};
\node[draw, circle, line width=0.5pt, fill=white](u53) at (5,7.5)[] {};
\node[draw, circle, line width=0.5pt, fill=white](u54) at (6,7.5)[] {};
\node[draw, circle, line width=0.5pt, fill=white](u55) at (9,7.5)[] {};
\node[draw, circle, line width=0.5pt, fill=white](u56) at (10,7.5)[] {};

\node[draw, circle, line width=0.5pt, fill=white](u60) at (2,8.5)[] {};
\node[draw, circle, line width=0.5pt, fill=white](u61) at (3,8.5)[] {};
\node[draw, circle, line width=0.5pt, fill=white](u62) at (4,8.5)[] {};
\node[draw, circle, line width=0.5pt, fill=white](u63) at (5,8.5)[] {};
\node[draw, circle, line width=0.5pt, fill=white](u64) at (6,8.5)[] {};
\node[draw, circle, line width=0.5pt, fill=white](u65) at (9,8.5)[] {};
\node[draw, circle, line width=0.5pt, fill=white](u66) at (10,8.5)[] {};

\node[draw, circle, line width=0.5pt, fill=white](u) at (0,10.5)[] {};
\node[draw, circle, line width=0.5pt, fill=white](u7) at (0,9.5)[] {};
\node[draw, circle, line width=0.5pt, fill=white](u70) at (2,9.5)[] {};
\node[draw, circle, line width=0.5pt, fill=white](u71) at (3,9.5)[] {};
\node[draw, circle, line width=0.5pt, fill=white](u72) at (4,9.5)[] {};
\node[draw, circle, line width=0.5pt, fill=white](u73) at (5,9.5)[] {};
\node[draw, circle, line width=0.5pt, fill=white](u74) at (6,9.5)[] {};
\node[draw, circle, line width=0.5pt, fill=white](u75) at (9,9.5)[] {};
\node[draw, circle, line width=0.5pt, fill=white](u76) at (10,9.5)[] {};

\draw[-, line width=1pt, color=red]  (u) -- (u7);
\draw[-, line width=1pt, color=red]  (u7) -- (u40);
\draw[-, line width=1pt, color=red]  (u7) -- (u70);
\draw[-, line width=1pt, color=red]  (u70) -- (u51);
\draw[-, line width=1pt, color=red]  (u51) -- (u72);
\draw[-, line width=1pt, color=red]  (u72) -- (u53);
\draw[-, line width=1pt, color=red]  (u53) -- (u74);
\draw[-, line width=1pt, color=red]  (u75) -- (u76);

\draw[-, line width=0.5pt]  (u50) -- (u51);
\draw[-, line width=0.5pt]  (u51) -- (u52);
\draw[-, line width=0.5pt]  (u52) -- (u53);
\draw[-, line width=0.5pt]  (u53) -- (u54);
\draw[-, line width=0.5pt]  (u55) -- (u56);

\draw[-, line width=0.5pt]  (u70) -- (u60);
\draw[-, line width=0.5pt]  (u60) -- (u50);
\draw[-, line width=0.5pt]  (u72) -- (u62);
\draw[-, line width=0.5pt]  (u62) -- (u52);
\draw[-, line width=0.5pt]  (u74) -- (u64);
\draw[-, line width=0.5pt]  (u64) -- (u54);
\draw[-, line width=0.5pt]  (u76) -- (u66);
\draw[-, line width=0.5pt]  (u66) -- (u56);

\draw[-, line width=0.5pt]  (u70) -- (u61);
\draw[-, line width=0.5pt]  (u61) -- (u72);
\draw[-, line width=0.5pt]  (u72) -- (u63);
\draw[-, line width=0.5pt]  (u63) -- (u74);
\draw[-, line width=0.5pt]  (u55) -- (u76);
\draw[-, line width=0.5pt]  (u65) -- (u76);

\draw[-, line width=0.5pt]  (u70) -- (u71);
\draw[-, line width=0.5pt]  (u71) -- (u72);
\draw[-, line width=0.5pt]  (u72) -- (u73);
\draw[-, line width=0.5pt]  (u73) -- (u74);

\path (6,8.5) -- (9,8.5) node [black, midway, sloped] {\Large$\dots$};
\node at (12.5,   8.5) {$H^1$};
\node[left = 0.05 of u7] () {\includegraphics[scale=0.06]{robot-red.png}};
\node[left = 0.05 of u] () {\includegraphics[scale=0.06]{robot-red.png}};

\node[draw, circle, line width=0.5pt, fill=white](w6) at (11,7.5)[] {};
\node[draw, circle, line width=0.5pt, fill=white](w7) at (11,8.5)[] {};
\node[draw, circle, line width=0.5pt, fill=red](w8) at (11,9.5)[] {};

\draw[-, line width=0.5pt]  (u56) -- (w6);
\draw[-, line width=0.5pt]  (w6) -- (w7);
\draw[-, line width=0.5pt]  (w7) -- (w8);
\draw[-, line width=1pt, color=red]  (w8) -- (u76);

\end{tikzpicture}
\caption{High-level representation of the instance constructed in the proof of Theorem~\ref{thm:keeping:the:same:makespan:is:NP:hard}. The depicted red paths are given as an example on how these paths could behave.}\label{fig:hardness}
\end{figure}
We achieve this result via a reduction from \textsc{$3$-SAT}, a classic \NP-hard problem~\cite{GJ75}. Let $\phi$ be an instance of \textsc{$3$-SAT} over a set of variables $X=\{x_1,\ldots,x_n\}$ and set of clauses $C=\{c_1\ldots,c_m\}$.
We create an instance $\mathcal{I}=\langle G, A, s_0, t \rangle$ together with a schedule $\sigma$ and prove that there exists an agent $a\in A$ such that if $a$ malfunctions-$1$ in turn $1$, then we can compute a set of delay-$1$ operations that result in a feasible schedule of $\mathcal{I}$ of makespan $\ell$ if and only if $\phi$ is a yes-instance of \textsc{$3$-SAT}. 
A simplified version of the instance we construct is depicted in Figure~\ref{fig:hardness}. The constructed graph $G$ is built using $2n+m$ gadgets $H$. We stress here that the proof is highly technical, and overly long. So we first present a very high-level sketch before proceeding with the whole proof. 

\paragraph{Sketch.}
Let us analyze the behavior of each type of agent, according to their color. 
First, there are two blue agents $b^i_{yes}$ and $b^i_{no}$ per variable in $x_i\in X$; the first is intended to follow the blue path $P^i_{yes}$, while the second the blue path $P^i_{no}$. The length of these paths is set to exactly $\ell-1$. This will be important later, as it does not allow any blue agent to perform more than one delay-$1$ operation. Then there is one red agent $r_j$ for each $H^j$ red agents. The agent that corresponds to $H^j$ is the one whose starting position is \textit{above} $H^j$. So, the first red agent that appears from the top is $r^1$, while the last is $r^{2n+m}$. According to the given schedule $\sigma$, for each $j$ the agent $r^j$ is expected to move down, enter $H^j$, follow its red path and finish at the red vertex of $H^j$. The exact red path for each red agent depends on the formula $\phi$. According to $\sigma$, the red agents are able to move without stopping and without interfering with the blue agents. Finally, there is the brown agent who just moves to the brown vertex. 

The key point is that the schedule $\sigma$ hinges on the brown agent moving on the first round. If, instead, the brown agent performs a malfunction-$1$ operation on the first round, then all the red agents will be delayed by one round. This throws off the timing of $\sigma$. In particular, for any $i\in [2n+m]$, the agent $r^i$ is stuck inside the gadget $H^j$ without being able to reach its terminal, until the blue agents move thought $H^j$.

The rest of the proof depends on the following three observations. First, the red agents $r^i$, $i \in [n]$, will force at least one of the agents $b^i_{yes}, b^i_{no}$ to delay-$1$, otherwise the makespan will be more than $\ell+1$. Second, if both $b^i_{yes} $ and $ b^i_{no}$, for some $i \in [n]$, delay-$1$, then the red agent $r^{n+m+i}$ forces an extra delay-$1$ operation on one of them. This results in a makespan of at least $\ell+1$ (since no blue agent can perform more than one delay-$1$ operations).
Thus, if after the delay-$1$ operations have been performed the resulting schedule has makespan $\ell$, then we can define an assignment over the variables of $X$ by checking which blue agents have delayed-$1$.
Thirdly, the red agents $r^{n+j}$, $j \in [m]$, verify that the assignment we just defined is a satisfying $\phi$, as otherwise they introduce additional delays resulting in a schedule of makespan at least $\ell+1$. 
Therefore, given a set of delay-$1$ operations that result in a schedule of makespan $\ell$, we create a satisfying assignment of $\phi$ by identifying which blue agents perform delay-$1$ operations. 
For the reverse direction, given a satisfying assignment of $\phi$, we decide which blue agents need to perform a delay-$1$ operation. Also, we use the previous observations to prove that we can easily synchronize the blue and red agents in order to produce a schedule of makespan at most $\ell$.

\medskip 

We now formally prove all of our claims.

    \paragraph{The construction of the graph $G$}. 
    The main tool we employ is a multi-purpose gadget $H$. This gadget consists of $4n+2$ paths of length $3$. Let $\langle u_{p,1}, u_{p,2}, u_{p,3} \rangle$ be the vertices of the $p^{th}$ such path. We then add the edges:
    \begin{itemize}
        \item $u_{p,1}u_{p+1,1}$, $u_{p,3}u_{p+1,3}$, for all $p \in [4n+1]$, 
        \item $u_{2q-1,1}u_{2q,3}$, $u_{2q,3}u_{2q+1,1}$, for all $q\in [2n]$, and
        \item $u_{2q-1,1}u_{2q,2}$, $u_{2q,2}u_{2q+1,1}$, for all $q\in [2n]$.
    \end{itemize}
    Finally, for each vertex $u_{p,1}$, $p \in [4n+1]$, we crate two additional paths each of length $14n+3m+5$ and we add an edge between the fist vertex of each path and $u_{p,1}$. We will say that these are the \textit{blocking paths} and we denote them by $S_{p,1}$ and $T_{p,1}$. To be clear, the paths $S_{p,1}$ and $T_{p,1}$ are connected to the vertex $u_{p,1}$, for every $p\in[4n+1]$. For an illustration of $H$, consider the graph $H^1$ of Figure~\ref{fig:hardness}. We choose to not depict the blocking paths to avoid overloading the figure. Note that $H$ is a planar graph of maximum degree $10$, which will also be true for the final graph $G$. 

    The graph $G$ uses multiple copies of $H$ as its building blocks. In particular, we create $2n+m$ a copies of $H$, and denote by $H^i$ its $i^{th}$ copy, for $i \in [2n+m]$. We will use the superscript $j$ in order to denote a vertex that belongs to the $m$ copies of the $H$ gadget related to the cause $c_j$ of $C$. That is, we denote by $u^j_{p,q}$ the copy of $u_{p,q}$ (for $p\in[3]$ and $q\in[4n+2]$) that belongs in $H^j$. When needed, we will also denote by $S^j_{p,1}$ and $T^j_{p,1}$ the copies of the paths $S_{p,1}$ and $T_{p,1}$ respectively that appear in $H^j$. 
    We add edges between consecutive copies of gadgets $H$ as follows: for each $j \in [m-1]$ we add the edge $u^j_{p,3}u^{j+1}_{p,1}$, for all $p \in [4n+1]$. 

    Next, we create a path $P = \langle v_1,\ldots,v_{2n+m+2}\rangle$ of $2n+m+2$ new vertices (this is the path containing the red agents in Figure~\ref{fig:hardness}). We add edges in order to connect vertices of $P$ with the copies of $H$. For each $i \in [2n+m]$, we add the edge $v_{i+1}u^i_{1,1}$. Then, for each vertex $v_{i}$, $i \in [2n+m+1]$, we create two additional paths of length $14n+3m+6$ and we add an edge between the fist vertex of each path and $v_{i}$. We also call these \textit{blocking} paths and denote them with $S_{i}$ and $T_{i}$ (similarly to the blocking paths we attached to vertices of $H$). 

    Finally, for each variable $x_i$, $i \in [n]$, we introduce four new paths, two consisting of $4n+3$ vertices and two of $4n+2$ vertices. In particular, we introduce two paths $P^{i}_{yes}$ and $P^{i}_{no}$ of $4n+3$ vertices each. Then, we add an edge between the last vertex of $P^{i}_{yes}$ and the vertex $u^1_{4(i-1)+2,1}$ and an edge between the last vertex of $P^{i}_{no}$ and the vertex $u^1_{4i,1}$. This corresponds to the top part of the blue paths depicted in Figure~\ref{fig:hardness}. After that we introduce the two paths $Q^{i}_{yes}$ and $Q^{i}_{no}$ consisting of $4n+2$ vertices each. We add an edge between the first vertex of $Q^{i}_{yes}$ and the vertex $u^{2n+m}_{4(i-1)+2,3}$ and an edge between the first vertex of $Q^{i}_{no}$ and the vertex $u^{2n+m}_{4i,3}$. This corresponds to the bottom part of the blue paths depicted in Figure~\ref{fig:hardness}. This finishes the construction of $G$. 

    \paragraph{Definition of $A$, $s_0$ and $t$.} 
    We begin by introducing two agents for each variable $x_i\in X$; we call these \textit{variable agents} (blue agents in Figure~\ref{fig:hardness}). Let $\nu_i$ and $\mu_i$ be the two agents we created for $x_i$, for $i \in [n]$. We set $s_0(\nu_i)$ to be the first vertex of $P^i_{yes}$ and $t(\nu_i)$ to be the last vertex of $Q^i_{yes}$. Similarly, we set $s_0(\mu_i)$ to be the first vertex of $P^i_{no}$ and $t(\mu_i)$ to be the last vertex of $Q^i_{no}$. 
    
    We then define the red agents of Figure~\ref{fig:hardness}. There are three types of red agents. First, we create the $\lambda_i$ agents, one for each variable $i \in [n]$; we call these \textit{select-assignment agents}. We set $s_0(\lambda_i)= v_i$ and $t(\lambda_i)=u^i_{4n+2,1}$, for all $i \in [n]$. 
    We also introduce a set of \textit{verifying-assignment agents}. For each $i\in [n]$, we create an agent $\tau_i$ and set $s_0(\tau_i)= v_{n+m+i}$ and $t(\tau_i)=u^{n+m+i}_{4n+2,1}$. We proceed with the set of \textit{clause agents}. For each $j \in [m]$ we introduce a clause agent $\kappa_j$ and set $s_0(\kappa_j)= v_{n+j}$ and $t(\kappa_j)=u^{n+j}_{4n+2,1}$.
    Lastly, the malfunction will be performed by one \textit{introduce delay agent} $a$ such that $s_0(a) = v_{2n+m+1}$ and $t(a)=v_{2n+m+2}$ (the brown agent in Figure~\ref{fig:hardness}). 
    
    Finally, we add several blocking agents that start in paths $S$ and have terminals in paths $T$. 
    To simplify the definition, we will separate these into three groups.
    
    We first consider agents for the vertices that have starting positions in the paths $S^j_{4n+1,1}$. For each $j\in [2n+m]$, we create a set of $3j$ agents $b^j_{4n+1,i}$, $i \in [3j]$. We set $s_0(b^j_i)$ to be the the $4n+2+i^{th}$ vertex of $S^j_{4n+1,1}$ and $t(b^j_i)$ to be the $10n+3m+3-i^{th}$ vertex of $T^j_{4n+1,1}$. The goal of these agents is to occupy the vertex $u^j_{4n+1,1}$ between all the turns from $4n+3$ up to $4n+3j+2$. This happens because the distance they have to move is equal to the makespan we want to achieve, i.e., $14n+3m+5$. Therefore we cannot afford to delay any of them. 

    Next we create a set of agents for each $j \in [2n+m]$. 
    We first fix a $j \in [2n+m]$. 
    Then, we create a set of $4n$ agents $b^j_{p,1}$, $p \in [4n]$. We set $s_0(b^j_{p,1})$ to be the $4n + 2 + 3j^{th}$ vertex of $S^j_{p,1}$ and $t(b^j_{p,1})$ to be the $10n+3m+ 3 - 3j^{th}$ vertex of $T^j_{p,1}$. Note that each vertex $u^j_{p,1}$ will be occupied by $b^j_{p,1}$ at the turn $4n + 2 + 3j$ as otherwise $b^j_{p,1}$ will never reach its terminal in time.

    Finally, we create a set of $10n+3m$ blocking agents $b_{i,q}$, $q \in [10n+3m]$, for every $i \in [2n+m]$. We also set $s_0(b_{i,q})$ to be the $q+2^{th}$ vertex of $S_{i+1}$ and $t(b_{i,q})$ to be the $14n+3m+3-q^{th}$ vertex of $T_{i+1}$. Once again, we know that the vertex $v_{i+1}$ will be occupied by the agents $b_{i,q}$, $q \in [10n+3m]$ from the $3$-rd turn up to the $10n+3m+2^{th}$ turn. Otherwise, at least one of these agents will never reach its terminal in time. This finishes the definition of the instance $\mathcal{I}$.

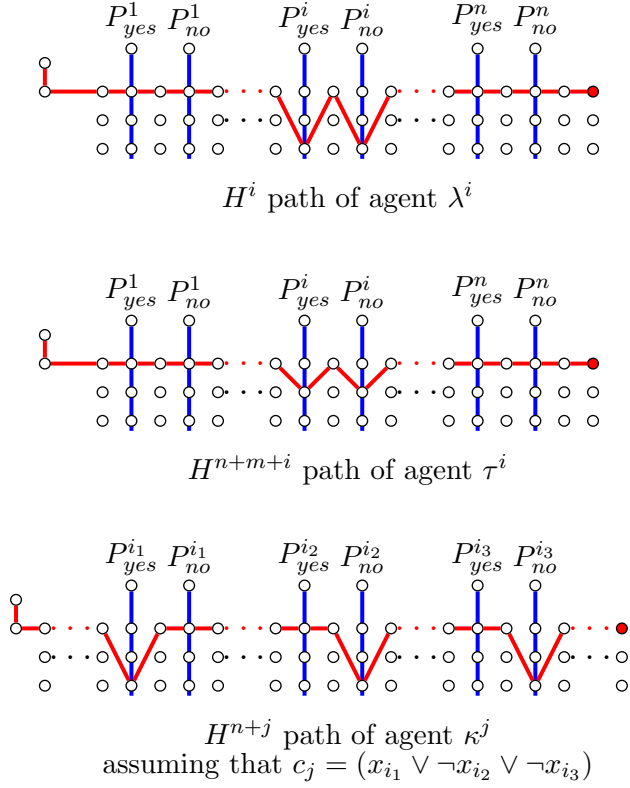
\begin{figure}[!t]
\centering
\begin{tikzpicture}[scale=0.38, inner sep=0.5mm]

\begin{scope}[yshift=-18.75cm]
\node at (3, 12) {$P^{i_1}_{yes}$};
\node at (5, 12) {$P^{i_1}_{no}$};
\node at (9, 12) {$P^{i_2}_{yes}$};
\node at (11, 12) {$P^{i_2}_{no}$};
\node at (15, 12) {$P^{i_3}_{yes}$};
\node at (17, 12) {$P^{i_3}_{no}$};

\node[draw, circle, line width=0.5pt, fill=white](b1) at (3,11)[] {};
\node[draw, circle, line width=0.5pt, fill=white](b2) at (5,11)[] {};

\node[draw, circle, line width=0.5pt, fill=white](b3) at (9,11)[] {};
\node[draw, circle, line width=0.5pt, fill=white](b4) at (11,11)[] {};

\node[draw, circle, line width=0.5pt, fill=white](b5) at (15,11)[] {};
\node[draw, circle, line width=0.5pt, fill=white](b6) at (17,11)[] {};

\node[](b7) at (3,7)[] {};
\node[](b8) at (5,7)[] {};

\node[](b9) at (9,7)[] {};
\node[](b10) at (11,7)[] {};

\node[](b11) at (15,7)[] {};
\node[](b12) at (17,7)[] {};

\draw[-, line width=1.5pt, color=blue]  (b1) -- (b7);
\draw[-, line width=1.5pt, color=blue]  (b2) -- (b8);
\draw[-, line width=1.5pt, color=blue]  (b3) -- (b9);
\draw[-, line width=1.5pt, color=blue]  (b4) -- (b10);
\draw[-, line width=1.5pt, color=blue]  (b5) -- (b11);
\draw[-, line width=1.5pt, color=blue]  (b6) -- (b12);


\node[draw, circle, line width=0.5pt, fill=white](u) at (-1,10.5)[] {};
\node[draw, circle, line width=0.5pt, fill=white](u1) at (-1,9.5)[] {};
\node[draw, circle, line width=0.5pt, fill=white](v10) at (0,9.5)[] {};
\node[draw, circle, line width=0.5pt, fill=white](v20) at (0,8.5)[] {};
\node[draw, circle, line width=0.5pt, fill=white](v30) at (0,7.5)[] {};

\node[draw, circle, line width=0.5pt, fill=white](u10) at (2,9.5)[] {};
\node[draw, circle, line width=0.5pt, fill=white](u11) at (3,9.5)[] {};
\node[draw, circle, line width=0.5pt, fill=white](u12) at (4,9.5)[] {};
\node[draw, circle, line width=0.5pt, fill=white](u13) at (5,9.5)[] {};
\node[draw, circle, line width=0.5pt, fill=white](u14) at (6,9.5)[] {};

\node[draw, circle, line width=0.5pt, fill=white](u15) at (8,9.5)[] {};
\node[draw, circle, line width=0.5pt, fill=white](u16) at (9,9.5)[] {};
\node[draw, circle, line width=0.5pt, fill=white](u17) at (10,9.5)[] {};
\node[draw, circle, line width=0.5pt, fill=white](u18) at (11,9.5)[] {};
\node[draw, circle, line width=0.5pt, fill=white](u19) at (12,9.5)[] {};

\node[draw, circle, line width=0.5pt, fill=white](u110) at (14,9.5)[] {};
\node[draw, circle, line width=0.5pt, fill=white](u111) at (15,9.5)[] {};
\node[draw, circle, line width=0.5pt, fill=white](u112) at (16,9.5)[] {};
\node[draw, circle, line width=0.5pt, fill=white](u113) at (17,9.5)[] {};
\node[draw, circle, line width=0.5pt, fill=white](u114) at (18,9.5)[] {};
\node[draw, circle, line width=0.5pt, fill=red](u115) at (20,9.5)[] {};

\node[draw, circle, line width=0.5pt, fill=white](u20) at (2,8.5)[] {};
\node[draw, circle, line width=0.5pt, fill=white](u21) at (3,8.5)[] {};
\node[draw, circle, line width=0.5pt, fill=white](u22) at (4,8.5)[] {};
\node[draw, circle, line width=0.5pt, fill=white](u23) at (5,8.5)[] {};
\node[draw, circle, line width=0.5pt, fill=white](u24) at (6,8.5)[] {};

\node[draw, circle, line width=0.5pt, fill=white](u25) at (8,8.5)[] {};
\node[draw, circle, line width=0.5pt, fill=white](u26) at (9,8.5)[] {};
\node[draw, circle, line width=0.5pt, fill=white](u27) at (10,8.5)[] {};
\node[draw, circle, line width=0.5pt, fill=white](u28) at (11,8.5)[] {};
\node[draw, circle, line width=0.5pt, fill=white](u29) at (12,8.5)[] {};

\node[draw, circle, line width=0.5pt, fill=white](u210) at (14,8.5)[] {};
\node[draw, circle, line width=0.5pt, fill=white](u211) at (15,8.5)[] {};
\node[draw, circle, line width=0.5pt, fill=white](u212) at (16,8.5)[] {};
\node[draw, circle, line width=0.5pt, fill=white](u213) at (17,8.5)[] {};
\node[draw, circle, line width=0.5pt, fill=white](u214) at (18,8.5)[] {};
\node[draw, circle, line width=0.5pt, fill=white](u215) at (20,8.5)[] {};

\node[draw, circle, line width=0.5pt, fill=white](u30) at (2,7.5)[] {};
\node[draw, circle, line width=0.5pt, fill=white](u31) at (3,7.5)[] {};
\node[draw, circle, line width=0.5pt, fill=white](u32) at (4,7.5)[] {};
\node[draw, circle, line width=0.5pt, fill=white](u33) at (5,7.5)[] {};
\node[draw, circle, line width=0.5pt, fill=white](u34) at (6,7.5)[] {};

\node[draw, circle, line width=0.5pt, fill=white](u35) at (8,7.5)[] {};
\node[draw, circle, line width=0.5pt, fill=white](u36) at (9,7.5)[] {};
\node[draw, circle, line width=0.5pt, fill=white](u37) at (10,7.5)[] {};
\node[draw, circle, line width=0.5pt, fill=white](u38) at (11,7.5)[] {};
\node[draw, circle, line width=0.5pt, fill=white](u39) at (12,7.5)[] {};

\node[draw, circle, line width=0.5pt, fill=white](u310) at (14,7.5)[] {};
\node[draw, circle, line width=0.5pt, fill=white](u311) at (15,7.5)[] {};
\node[draw, circle, line width=0.5pt, fill=white](u312) at (16,7.5)[] {};
\node[draw, circle, line width=0.5pt, fill=white](u313) at (17,7.5)[] {};
\node[draw, circle, line width=0.5pt, fill=white](u314) at (18,7.5)[] {};
\node[draw, circle, line width=0.5pt, fill=white](u315) at (20,7.5)[] {};

\draw[-, line width=1.5pt, color=red]  (u) -- (u1);
\draw[-, line width=1.5pt, color=red]  (u1) -- (v10);
\draw[-, line width=1.5pt, color=red]  (u10) -- (u31);
\draw[-, line width=1.5pt, color=red]  (u31) -- (u12);
\draw[-, line width=1.5pt, color=red]  (u16) -- (u17);
\draw[-, line width=1.5pt, color=red]  (u15) -- (u16);
\draw[-, line width=1.5pt, color=red]  (u17) -- (u38);
\draw[-, line width=1.5pt, color=red]  (u38) -- (u19);
\draw[-, line width=1.5pt, color=red]  (u110) -- (u111);
\draw[-, line width=1.5pt, color=red]  (u111) -- (u112);
\draw[-, line width=1.5pt, color=red]  (u112) -- (u313);
\draw[-, line width=1.5pt, color=red]  (u313) -- (u114);

\draw[-, line width=1.5pt,  color=red]  (u12) -- (u13);
\draw[-, line width=1.5pt,  color=red]  (u13) -- (u14);

\path (0,8.5) -- (2,8.5) node [black, midway] {\Large{$\dots$}};
\path (6,8.5) -- (8,8.5) node [black, midway, sloped] {\Large$\dots$};
\path (12,8.5) -- (14,8.5) node [black, midway, sloped] {\Large$\dots$};
\path (18,8.5) -- (20,8.5) node [black, midway] {\Large{$\dots$}};

\path (0,9.5) -- (2,9.5) node [red, midway] {\Large{$\dots$}};
\path (6,9.5) -- (8,9.5) node [red, midway] {\Large{$\dots$}};
\path (12,9.5) -- (14,9.5) node [red, midway, sloped] {\Large$\dots$};
\path (18,9.5) -- (20,9.5) node [red, midway] {\Large{$\dots$}};

\node at (10.5,   5.8) {$H^{n+j}$ path of agent $\kappa^j$};
\node at (10.5,   4.7) {assuming that $c_j = (x_{i_1}\lor \neg x_{i_2}\lor \neg x_{i_3} )$};

\end{scope}

\begin{scope}
\node at (3, 12) {$P^1_{yes}$};
\node at (5, 12) {$P^1_{no}$};
\node at (9, 12) {$P^i_{yes}$};
\node at (11, 12) {$P^i_{no}$};
\node at (15, 12) {$P^n_{yes}$};
\node at (17, 12) {$P^n_{no}$};

\node[draw, circle, line width=0.5pt, fill=white](b1) at (3,11)[] {};
\node[draw, circle, line width=0.5pt, fill=white](b2) at (5,11)[] {};

\node[draw, circle, line width=0.5pt, fill=white](b3) at (9,11)[] {};
\node[draw, circle, line width=0.5pt, fill=white](b4) at (11,11)[] {};

\node[draw, circle, line width=0.5pt, fill=white](b5) at (15,11)[] {};
\node[draw, circle, line width=0.5pt, fill=white](b6) at (17,11)[] {};

\node[](b7) at (3,7)[] {};
\node[](b8) at (5,7)[] {};

\node[](b9) at (9,7)[] {};
\node[](b10) at (11,7)[] {};

\node[](b11) at (15,7)[] {};
\node[](b12) at (17,7)[] {};

\draw[-, line width=1.5pt, color=blue]  (b1) -- (b7);
\draw[-, line width=1.5pt, color=blue]  (b2) -- (b8);
\draw[-, line width=1.5pt, color=blue]  (b3) -- (b9);
\draw[-, line width=1.5pt, color=blue]  (b4) -- (b10);
\draw[-, line width=1.5pt, color=blue]  (b5) -- (b11);
\draw[-, line width=1.5pt, color=blue]  (b6) -- (b12);


\node[draw, circle, line width=0.5pt, fill=white](u) at (0,10.5)[] {};
\node[draw, circle, line width=0.5pt, fill=white](u1) at (0,9.5)[] {};
\node[draw, circle, line width=0.5pt, fill=white](u10) at (2,9.5)[] {};
\node[draw, circle, line width=0.5pt, fill=white](u11) at (3,9.5)[] {};
\node[draw, circle, line width=0.5pt, fill=white](u12) at (4,9.5)[] {};
\node[draw, circle, line width=0.5pt, fill=white](u13) at (5,9.5)[] {};
\node[draw, circle, line width=0.5pt, fill=white](u14) at (6,9.5)[] {};

\node[draw, circle, line width=0.5pt, fill=white](u15) at (8,9.5)[] {};
\node[draw, circle, line width=0.5pt, fill=white](u16) at (9,9.5)[] {};
\node[draw, circle, line width=0.5pt, fill=white](u17) at (10,9.5)[] {};
\node[draw, circle, line width=0.5pt, fill=white](u18) at (11,9.5)[] {};
\node[draw, circle, line width=0.5pt, fill=white](u19) at (12,9.5)[] {};

\node[draw, circle, line width=0.5pt, fill=white](u110) at (14,9.5)[] {};
\node[draw, circle, line width=0.5pt, fill=white](u111) at (15,9.5)[] {};
\node[draw, circle, line width=0.5pt, fill=white](u112) at (16,9.5)[] {};
\node[draw, circle, line width=0.5pt, fill=white](u113) at (17,9.5)[] {};
\node[draw, circle, line width=0.5pt, fill=white](u114) at (18,9.5)[] {};
\node[draw, circle, line width=0.5pt, fill=red](u115) at (19,9.5)[] {};

\node[draw, circle, line width=0.5pt, fill=white](u20) at (2,8.5)[] {};
\node[draw, circle, line width=0.5pt, fill=white](u21) at (3,8.5)[] {};
\node[draw, circle, line width=0.5pt, fill=white](u22) at (4,8.5)[] {};
\node[draw, circle, line width=0.5pt, fill=white](u23) at (5,8.5)[] {};
\node[draw, circle, line width=0.5pt, fill=white](u24) at (6,8.5)[] {};

\node[draw, circle, line width=0.5pt, fill=white](u25) at (8,8.5)[] {};
\node[draw, circle, line width=0.5pt, fill=white](u26) at (9,8.5)[] {};
\node[draw, circle, line width=0.5pt, fill=white](u27) at (10,8.5)[] {};
\node[draw, circle, line width=0.5pt, fill=white](u28) at (11,8.5)[] {};
\node[draw, circle, line width=0.5pt, fill=white](u29) at (12,8.5)[] {};

\node[draw, circle, line width=0.5pt, fill=white](u210) at (14,8.5)[] {};
\node[draw, circle, line width=0.5pt, fill=white](u211) at (15,8.5)[] {};
\node[draw, circle, line width=0.5pt, fill=white](u212) at (16,8.5)[] {};
\node[draw, circle, line width=0.5pt, fill=white](u213) at (17,8.5)[] {};
\node[draw, circle, line width=0.5pt, fill=white](u214) at (18,8.5)[] {};
\node[draw, circle, line width=0.5pt, fill=white](u215) at (19,8.5)[] {};

\node[draw, circle, line width=0.5pt, fill=white](u30) at (2,7.5)[] {};
\node[draw, circle, line width=0.5pt, fill=white](u31) at (3,7.5)[] {};
\node[draw, circle, line width=0.5pt, fill=white](u32) at (4,7.5)[] {};
\node[draw, circle, line width=0.5pt, fill=white](u33) at (5,7.5)[] {};
\node[draw, circle, line width=0.5pt, fill=white](u34) at (6,7.5)[] {};

\node[draw, circle, line width=0.5pt, fill=white](u35) at (8,7.5)[] {};
\node[draw, circle, line width=0.5pt, fill=white](u36) at (9,7.5)[] {};
\node[draw, circle, line width=0.5pt, fill=white](u37) at (10,7.5)[] {};
\node[draw, circle, line width=0.5pt, fill=white](u38) at (11,7.5)[] {};
\node[draw, circle, line width=0.5pt, fill=white](u39) at (12,7.5)[] {};

\node[draw, circle, line width=0.5pt, fill=white](u310) at (14,7.5)[] {};
\node[draw, circle, line width=0.5pt, fill=white](u311) at (15,7.5)[] {};
\node[draw, circle, line width=0.5pt, fill=white](u312) at (16,7.5)[] {};
\node[draw, circle, line width=0.5pt, fill=white](u313) at (17,7.5)[] {};
\node[draw, circle, line width=0.5pt, fill=white](u314) at (18,7.5)[] {};
\node[draw, circle, line width=0.5pt, fill=white](u315) at (19,7.5)[] {};

\draw[-, line width=1.5pt, color=red]  (u) -- (u1);
\draw[-, line width=1.5pt, color=red]  (u1) -- (u10);
\draw[-, line width=1.5pt, color=red]  (u10) -- (u11);
\draw[-, line width=1.5pt, color=red]  (u11) -- (u12);
\draw[-, line width=1.5pt, color=red]  (u15) -- (u36);
\draw[-, line width=1.5pt, color=red]  (u36) -- (u17);
\draw[-, line width=1.5pt, color=red]  (u17) -- (u38);
\draw[-, line width=1.5pt, color=red]  (u38) -- (u19);
\draw[-, line width=1.5pt, color=red]  (u110) -- (u111);
\draw[-, line width=1.5pt, color=red]  (u111) -- (u112);
\draw[-, line width=1.5pt, color=red]  (u112) -- (u113);
\draw[-, line width=1.5pt, color=red]  (u113) -- (u114);
\draw[-, line width=1.5pt, color=red]  (u114) -- (u115);

\draw[-, line width=1.5pt,  color=red]  (u12) -- (u13);
\draw[-, line width=1.5pt,  color=red]  (u13) -- (u14);

\path (6,8.5) -- (8,8.5) node [black, midway, sloped] {\Large$\dots$};
\path (12,8.5) -- (14,8.5) node [black, midway, sloped] {\Large$\dots$};
\path (6,9.5) -- (8,9.5) node [red, midway] {\Large{$\dots$}};
\path (12,9.5) -- (14,9.5) node [red, midway, sloped] {\Large$\dots$};
\node at (10.5,   5.8) {$H^{i}$ path of agent $\lambda^i$};

\end{scope}

\begin{scope}[yshift=-9.5cm]
\node at (3, 12) {$P^1_{yes}$};
\node at (5, 12) {$P^1_{no}$};
\node at (9, 12) {$P^i_{yes}$};
\node at (11, 12) {$P^i_{no}$};
\node at (15, 12) {$P^n_{yes}$};
\node at (17, 12) {$P^n_{no}$};

\node[draw, circle, line width=0.5pt, fill=white](b1) at (3,11)[] {};
\node[draw, circle, line width=0.5pt, fill=white](b2) at (5,11)[] {};

\node[draw, circle, line width=0.5pt, fill=white](b3) at (9,11)[] {};
\node[draw, circle, line width=0.5pt, fill=white](b4) at (11,11)[] {};

\node[draw, circle, line width=0.5pt, fill=white](b5) at (15,11)[] {};
\node[draw, circle, line width=0.5pt, fill=white](b6) at (17,11)[] {};

\node[](b7) at (3,7)[] {};
\node[](b8) at (5,7)[] {};

\node[](b9) at (9,7)[] {};
\node[](b10) at (11,7)[] {};

\node[](b11) at (15,7)[] {};
\node[](b12) at (17,7)[] {};

\draw[-, line width=1.5pt, color=blue]  (b1) -- (b7);
\draw[-, line width=1.5pt, color=blue]  (b2) -- (b8);
\draw[-, line width=1.5pt, color=blue]  (b3) -- (b9);
\draw[-, line width=1.5pt, color=blue]  (b4) -- (b10);
\draw[-, line width=1.5pt, color=blue]  (b5) -- (b11);
\draw[-, line width=1.5pt, color=blue]  (b6) -- (b12);


\node[draw, circle, line width=0.5pt, fill=white](u) at (0,10.5)[] {};
\node[draw, circle, line width=0.5pt, fill=white](u1) at (0,9.5)[] {};
\node[draw, circle, line width=0.5pt, fill=white](u10) at (2,9.5)[] {};
\node[draw, circle, line width=0.5pt, fill=white](u11) at (3,9.5)[] {};
\node[draw, circle, line width=0.5pt, fill=white](u12) at (4,9.5)[] {};
\node[draw, circle, line width=0.5pt, fill=white](u13) at (5,9.5)[] {};
\node[draw, circle, line width=0.5pt, fill=white](u14) at (6,9.5)[] {};

\node[draw, circle, line width=0.5pt, fill=white](u15) at (8,9.5)[] {};
\node[draw, circle, line width=0.5pt, fill=white](u16) at (9,9.5)[] {};
\node[draw, circle, line width=0.5pt, fill=white](u17) at (10,9.5)[] {};
\node[draw, circle, line width=0.5pt, fill=white](u18) at (11,9.5)[] {};
\node[draw, circle, line width=0.5pt, fill=white](u19) at (12,9.5)[] {};

\node[draw, circle, line width=0.5pt, fill=white](u110) at (14,9.5)[] {};
\node[draw, circle, line width=0.5pt, fill=white](u111) at (15,9.5)[] {};
\node[draw, circle, line width=0.5pt, fill=white](u112) at (16,9.5)[] {};
\node[draw, circle, line width=0.5pt, fill=white](u113) at (17,9.5)[] {};
\node[draw, circle, line width=0.5pt, fill=white](u114) at (18,9.5)[] {};
\node[draw, circle, line width=0.5pt, fill=red](u115) at (19,9.5)[] {};

\node[draw, circle, line width=0.5pt, fill=white](u20) at (2,8.5)[] {};
\node[draw, circle, line width=0.5pt, fill=white](u21) at (3,8.5)[] {};
\node[draw, circle, line width=0.5pt, fill=white](u22) at (4,8.5)[] {};
\node[draw, circle, line width=0.5pt, fill=white](u23) at (5,8.5)[] {};
\node[draw, circle, line width=0.5pt, fill=white](u24) at (6,8.5)[] {};

\node[draw, circle, line width=0.5pt, fill=white](u25) at (8,8.5)[] {};
\node[draw, circle, line width=0.5pt, fill=white](u26) at (9,8.5)[] {};
\node[draw, circle, line width=0.5pt, fill=white](u27) at (10,8.5)[] {};
\node[draw, circle, line width=0.5pt, fill=white](u28) at (11,8.5)[] {};
\node[draw, circle, line width=0.5pt, fill=white](u29) at (12,8.5)[] {};

\node[draw, circle, line width=0.5pt, fill=white](u210) at (14,8.5)[] {};
\node[draw, circle, line width=0.5pt, fill=white](u211) at (15,8.5)[] {};
\node[draw, circle, line width=0.5pt, fill=white](u212) at (16,8.5)[] {};
\node[draw, circle, line width=0.5pt, fill=white](u213) at (17,8.5)[] {};
\node[draw, circle, line width=0.5pt, fill=white](u214) at (18,8.5)[] {};
\node[draw, circle, line width=0.5pt, fill=white](u215) at (19,8.5)[] {};

\node[draw, circle, line width=0.5pt, fill=white](u30) at (2,7.5)[] {};
\node[draw, circle, line width=0.5pt, fill=white](u31) at (3,7.5)[] {};
\node[draw, circle, line width=0.5pt, fill=white](u32) at (4,7.5)[] {};
\node[draw, circle, line width=0.5pt, fill=white](u33) at (5,7.5)[] {};
\node[draw, circle, line width=0.5pt, fill=white](u34) at (6,7.5)[] {};

\node[draw, circle, line width=0.5pt, fill=white](u35) at (8,7.5)[] {};
\node[draw, circle, line width=0.5pt, fill=white](u36) at (9,7.5)[] {};
\node[draw, circle, line width=0.5pt, fill=white](u37) at (10,7.5)[] {};
\node[draw, circle, line width=0.5pt, fill=white](u38) at (11,7.5)[] {};
\node[draw, circle, line width=0.5pt, fill=white](u39) at (12,7.5)[] {};

\node[draw, circle, line width=0.5pt, fill=white](u310) at (14,7.5)[] {};
\node[draw, circle, line width=0.5pt, fill=white](u311) at (15,7.5)[] {};
\node[draw, circle, line width=0.5pt, fill=white](u312) at (16,7.5)[] {};
\node[draw, circle, line width=0.5pt, fill=white](u313) at (17,7.5)[] {};
\node[draw, circle, line width=0.5pt, fill=white](u314) at (18,7.5)[] {};
\node[draw, circle, line width=0.5pt, fill=white](u315) at (19,7.5)[] {};

\draw[-, line width=1.5pt, color=red]  (u) -- (u1);
\draw[-, line width=1.5pt, color=red]  (u1) -- (u10);
\draw[-, line width=1.5pt, color=red]  (u10) -- (u11);
\draw[-, line width=1.5pt, color=red]  (u11) -- (u12);
\draw[-, line width=1.5pt, color=red]  (u15) -- (u26);
\draw[-, line width=1.5pt, color=red]  (u26) -- (u17);
\draw[-, line width=1.5pt, color=red]  (u17) -- (u28);
\draw[-, line width=1.5pt, color=red]  (u28) -- (u19);
\draw[-, line width=1.5pt, color=red]  (u110) -- (u111);
\draw[-, line width=1.5pt, color=red]  (u111) -- (u112);
\draw[-, line width=1.5pt, color=red]  (u112) -- (u113);
\draw[-, line width=1.5pt, color=red]  (u113) -- (u114);
\draw[-, line width=1.5pt, color=red]  (u114) -- (u115);

\draw[-, line width=1.5pt,  color=red]  (u12) -- (u13);
\draw[-, line width=1.5pt,  color=red]  (u13) -- (u14);

\path (6,8.5) -- (8,8.5) node [black, midway, sloped] {\Large$\dots$};
\path (12,8.5) -- (14,8.5) node [black, midway, sloped] {\Large$\dots$};
\path (6,9.5) -- (8,9.5) node [red, midway] {\Large{$\dots$}};
\path (12,9.5) -- (14,9.5) node [red, midway, sloped] {\Large$\dots$};
\node at (10.5, 5.8) {$H^{n+m+i}$ path of agent $\tau^i$};

\end{scope}

\end{tikzpicture}
\caption{A depiction of the paths followed by the select-assignment, verifying-assignment and clause agents $\lambda^i, \tau^i$ and $k^j$ respectively, according to the schedule $\sigma$ described in the proof of Theorem~\ref{thm:keeping:the:same:makespan:is:NP:hard}.}\label{fig:hardness2}
\end{figure}

    
    \paragraph{Definition of $\sigma$.}For any blocking agent, we define their schedule as the shortest path between their starting and terminal positions. Note that these paths are of length $14n+3m+5$. 

    Next, we give the schedule for the variable agents $\nu_i$ and $\mu_i$, $i \in [n]$. We set these agents to move thought the shortest paths between their starting and terminal positions, without stopping, until they reach their destinations. In particular,
    \begin{itemize}
        \item The path of $\nu_i$ consists of $P^i_{yes}, u^1_{4(i-1)+2,1},u^1_{4(i-1)+2,2},u^1_{4(i-1)+2,3},\ldots, u^{2n+m}_{4(i-1)+2,1},$ \\ $u^{2n+m}_{4(i-1)+2,2},u^{2n+m}_{4(i-1)+2,3}, Q^i_{yes},t(\nu_i)$, where the vertices of $P^i_{yes}$, $Q^i_{yes}$ are assumed to appear in the appropriate order and $t(\nu_i)$ is repeated twice at the end (as it already appears in $Q^i_{yes}$).
        \item The path of $\mu_i$ consists of $P^i_{no}, u^1_{4i,1},u^1_{4i,2},u^1_{4i,3},\ldots, u^{2n+m}_{4i,1},u^{2n+m}_{4i,2},u^{2n+m}_{4i,3}$ $, Q^i_{no},t(\mu_i)$, where the vertices of $P^i_{no}$, $Q^i_{no}$ are assumed to appear in the appropriate order and $t(\mu_i)$ is repeated twice in the end (as it already appears in $Q^i_{no}$).  
    \end{itemize}

    The schedule of the select-assignment agents $\lambda_i$, $i \in [n]$ is as follows (see also the first graph in Figure~\ref{fig:hardness2}). The agent $\lambda_i$ reaches its terminal in the turn $4n+3$ and stays there for the rest of the schedule. Therefore, we will just give their schedule for the first $4n+3$ turns. We set: 
    \begin{itemize}
        \item $s_1(\lambda_i)=v_{i+1}$,
        \item $s_{l}(\lambda_i)= u^i_{l-1,1}$, for any turn $l \in [4n+3]\setminus \{1, 4(i-1)+3, 4i+1\}$, and
        \item $s_l(\lambda_i) = u^i_{l-1,3}$, for $l\in \{4(i-1)+3, 4i+1\}$.
    \end{itemize}

    Then we have the schedules of the verifying-assignment agents $\tau_i$, $i \in [n]$ (see also the second graph in Figure~\ref{fig:hardness2}). The agent $\tau_i$ reaches its terminal in the turn $4n+3$ and stays there for the rest of the schedule. Therefore, we will just give their schedule for the first $4n+3$ turns. 
    We set: 
    \begin{itemize}
        \item $s_1(\tau_i)=v_{n+m+i+1}$,
        \item $s_{l}(\tau_i)= u^{n+m+i}_{l-1,1}$, for any turn $l \in [4n+3]\setminus \{1, 4(i-1)+3, 4i+1\}$, and
        \item $s_l(\tau_i) = u^{n+m+i}_{l-1,2}$, for $l\in \{4(i-1)+3, 4i+1\}$.
    \end{itemize}

    We proceed with the definition of the schedule for the clause agents $\kappa_j$, $j \in [m]$ (see also the third graph in Figure~\ref{fig:hardness2}). In order to define this schedule we first define a set of indices $I_j$ and a set of vertices $V_j$. To define $I_j$, we consider the literals that appear in the clause $c_j$. In particular, let $\{i_1,i_2,i_3\}\subseteq [n]$ be the set of indices such that, for all $i \in \{i_1,i_2,i_3\}$, the variable $x_i$ appear in $c_j$. For any $i \in \{i_1,i_2,i_3\}$, if $x_i$ appear in $c_j$ as a positive literal we include in $I_j$ the index $4(i-1)+3$, otherwise we include in $I_j$ the index $4i+1$. Then we set $V_j = \{u^{n+j}_{i-1,3} \mid i \in I_j\}$.
    We are ready to define the schedule for $\kappa_j$. We set: 
    \begin{itemize}
        \item $s_1(\kappa_j)=v_{n+j+1}$,
        \item $s_{l}(\kappa_j)= u^{n+j}_{l-1,1}$, for any turn $l \in [4n+3]\setminus (I_j \cup  \{1\})$, and
        \item $s_l(\kappa_j) = u^{n+j}_{l-1,3}$, for $l\in I_j$.
    \end{itemize}
    Similarly to the $\lambda$ and $\tau$ agents, each $\kappa_j$ agent reaches its terminal in turn $4n+3$ and stays there for the rest of the schedule.
    
    Finally, we give the schedule of $a$. This agents is supposed to move to its destination during the first turn and stay there. That is, $a$ follows the schedule $s_0(a),t(a), \ldots,t(a)$.
    
    This completes the construction of the instance.

    \paragraph{The reduction.}
    We will now prove that if the agent $a$ performs a malfunction-$1$ operation in the first turn, then there exists a sequence of delay-$1$ operations that results in a schedule of length $14n+3m+5$ for $\mathcal{I}$ if and only if the given formula $\phi$ of \textsc{$3$-SAT} is satisfiable.

    We first show that given a satisfying assignment of $\phi$ we can compute the sequence of delay-$1$ operations that results to a schedule of makespan $14n+3m+5$.
    In particular, for each variable $x_i$, if $x_i=\textrm{true}$ we have $\nu_i$ perform a delay-$1$ operation in the first turn. Otherwise, we have agent $\mu_1$ perform a delay-$1$ operation in the first turn. Then, we deal with the red agents $\lambda$, $\tau$ and $\kappa$. 

    For each agent $\lambda_i$, we have that at least one of the vertices $u^i_{4(i-1)+2,3}$ and $u^i_{4i,3}$ is not occupied in turn $4n+2+3i$. This happens because we have introduced a delay-$1$ in one of the $\nu_i$ and $\mu_i$. Therefore, we can move the agent $\lambda_i$ until the non-occupied vertex (either $u^i_{4(i-1)+2,3}$ or $u^i_{4i,3}$) and then have them perform delay-$1$ operations up to turn $4n+2+3i$. Then, the agent moves normally until it reaches its terminal. 
    This is a valid schedule as none of the vertices $\lambda_i$ needs to pass thought is occupied from turn $2$ until turn $4n+3i$ and it can reach any of them during this time. The same holds from the turn $4n+2+3i+1$ and onward. Indeed, the only potential collision after turn $4n+2+3i$ is with agents $\nu$ and $\mu$, but all of these have left the $H^i$ gadget before turn $4n+2+3i+2$, i.e., the next time $\lambda_i$ may cross them. 

    Similar, for each agent $\tau_i$, one of the $u^{n+m+i}_{4(i-1)+2,2}$ and $u^{n+m+i}_{4i,2}$ is not occupied in turn $4n+2+3(n+m+i)$. This happens because we have introduced a delay-$1$ in exactly one of the $\nu_i$ and $\mu_i$. Therefore, in turn $4n+2+3(n+m+i)$, either $\nu_i$ has already reached $u^{n+m+i}_{4(i-1)+2,3}$ or $\mu_i$ has already reached $u^{n+m+i}_{4i,3}$.
    We will move the agent $\tau_i$ until the vertex $u^{n+m+i}_{4(i-1)+1,1}$ or $u^{n+m+i}_{4i-1,1}$ (depending on the non-delayed agent), perform delay-$1$ operations until the turn $4n+2+3(n+m+i)-1$ and then follow the original schedule without any delays. The agent $\tau_i$ can reach any of the wanted vertices before any of them is occupied. Then, after turn $4n+2+3(n+m+i)-1$, all the vertices it moves through are not occupied because no blocking agent appears in its path after turn $4n+2+3(n+m+i)$ and all $\nu$ and $\mu$ agents has left these vertices by turn $4n+2+3(n+m+i)+1$ (even if they have been delayed by one turn). 

    Now we deal with the agents $\kappa_j$. Notice that at least one of the vertices $v\in V_j$ is not occupied by an agent in turn $4n+2+3(n+j) $. This holds because the vertices in $V_j$ represent the literals that appear in $c_j$ and, since we delay agents based on the satisfying assignment, at least one of the agents that should pass from the these vertices has been delayed. 
    Therefore, we can move the agent $\kappa_j$ until they reach a vertex $v\in V_j$ that will not be occupied in the turn $4n+2+3(n+j)$, keep them there until turn $4n+2+3(n+j)$ and then follow their original schedule. 
    This schedule can be shown to be feasible following the same arguments as the agents $\lambda_i$, $i \in [n]$.

    We do not introduce any other delays. 

    Finally, all agents $\lambda$, $\tau$ and $\kappa$ reach their terminals before the turn $14n+3m+5$. Indeed, the agents that move thought the vertices of $H^j$ will start moving consecutively (until they reach their terminals) from the turn $4n+2+3j$. Also, even if any such agent was in $u^j_{1,1}$ in turn $4n+2+3j$, it reaches $u^j_{4n+2,1}$ before turn $8n+ 3 +3 j$, where $j\le 2n+m$. 
    Additionally, we have introduced at most one delay-$1$ by the agents $\nu_i$ and $\mu_i$, $i \in [n]$. Since they reach their terminals in turn $14n+3m+4$ according to $\sigma$, the still reach their terminals by turn $14n+3m+5$ according to the updated schedule. Thus, the updated schedule has makespan $14n+3m+5$.

    For the reverse direction, we show that given a set of delay-$1$ operations that result to a schedule of makespan $\ell=14n+3m+5$, we can compute a satisfying assignment of $\phi$.

    Observe first that no blocking agent can be delayed as they start at distance exactly $\ell$ from their targets. Therefore:
    \begin{itemize}
        \item any agent $\lambda^i$, $i \in [n]$, is located between $u^i_{1,1}$ and $u^i_{4n+1,1}$ from the turn $3$ until the turn $4n+2+3i$,
        \item any agent $\kappa^j$, $j\in [m]$, is located between $u^{n+j}_{1,1}$ and $u^{n+j}_{4n+1,1}$ from the turn $3$ until the turn $4n+2+3(n+j)$,
        \item any agent $\tau^i$, $i \in [n]$, is located between $u^{n+m+i}_{1,1}$ and $u^{n+m+i}_{4n+1,1}$ from the turn $3$ until the turn $4n+2+3(n+m+i)$.
    \end{itemize}

    Since the blocking agents cannot be delayed, none of the agents $\lambda^i$, $\tau^i$ and $\kappa^j$, $i \in [n]$ and $j \in [m]$, can occupy copies of $u_{p,1}$, $p \in [4n+1]$, during the turns when these are occupied by the blocking agents. 
    
    We will now prove some properties that must hold in order for the updated schedule to be of makespan $14n+3m+5$.

    \begin{claim} \label{claim:lambda:introduce:delay:to:literals}
        For each $i\in [n]$, at least one of the $\nu_i$ and $\mu_i$ must delay-$1$.
    \end{claim}

    \begin{proofclaim}
    Fix an $i \in [n]$ and consider the agent $\lambda_i$. Since we cannot delay any blocking agent, $\lambda_i$ occupies either $u^i_{4(i-1)+2,3}$ or $u^i_{4i,3}$ in turn $4n+2+3i$. Since this is the first turn that $\nu_i$ and $\mu_i$ can reach these vertices, we have that at least one of them must be delayed-$1$.
    \end{proofclaim}

    \begin{claim}\label{claim:tau:verify:that:we:have:assignment}
        For each $i\in [n]$, at most one of the $\nu_i$ and $\mu_i$ can delay-$1$.
    \end{claim}

    \begin{proofclaim}
        Fix an $i \in [n]$ and consider the agent $\tau_i$. Since we cannot delay any blocking agent, $\tau_i$ occupies either $u^{n+m+i}_{4(i-1)+2,2}$ or $u^{n+m+i}_{4i,2}$ in turn $4n+2+3(n+m+i)$. 
        This is the latest that any of the agents $\nu_i$ and $\mu_i$ can reach these vertices as, otherwise, they would have performed more delay-$1$ operations, and they wouldn't be able to  reach their terminals by turn $14n+3m+5$. Therefore, at least one of them has not been delayed.
    \end{proofclaim}

    We are now able to define an assignment over the variables $x_i$ based on which variable agents have been delayed.
    For each $i \in [n]$, if $\nu_i$ has been delayed then we set $x_i=\textrm{true}$, otherwise ($\mu_i$ has been delayed) we set $x_i=\textrm{false}$. It follows from Claims~\ref{claim:lambda:introduce:delay:to:literals} and ~\ref{claim:tau:verify:that:we:have:assignment} that:
    \begin{itemize}
        \item $x_i=\textrm{true}$ if and only if $\nu_i$ has been delayed by one turn, and,
        \item $x_i=\textrm{false}$ if and only if $\mu_i$ has been delayed by one turn. 
    \end{itemize}

    We will now use the agents $\kappa^j$, $j\in [m]$, to prove that this is indeed a satisfying assignment of the $\phi$.
    Lets fix a clause $c_j$ in $\phi$. Note that the agent $\kappa_j$ must occupy one of the vertices $v \in V_j$ in turn $4n+2+3(n+j)$, as otherwise it would delay a blocking agent. Consider now the agent $\alpha$ that passes through the vertex $v$. The agent $\alpha$ is either $\nu_i$ or $\mu_i$, for some $i\in [n]$. Also, $\alpha$ passes from $v$ on turn $4n+2+3(n+j)+1$, i.e., it is already delayed. Indeed, if this were not the case, then we would need to delay $\alpha$, which would contradict Claims~\ref{claim:lambda:introduce:delay:to:literals} and ~\ref{claim:tau:verify:that:we:have:assignment}. 
    Note that if $\alpha$ is $\nu_i$, then the literal $x_i$ appears in $c_j$. Also, the assignment we defined sets $x_i = \textrm{true}$ (as $\nu_i$ has been delayed). Similarly, if $\alpha$ is $\mu_i$, then the literal $\neg x_i$ appears in $c_j$. Also, the assignment we created sets $x_i = \textrm{false}$ (as $\mu_i$ has been delayed). 
    In both cases $c_j$ is satisfied by the truth assignment we defined.

    Therefore any set of delays resulting to a schedule of makespan $14n+3m+5$ corresponds to a satisfying assignment of $\phi$, completing the proof of the theorem.
\end{proof}

\fi

\section{Communication Protocols}
In what follows, we consider communication mechanisms that will allow the agents to take decisions by themselves. In particular, given a schedule of makespan $\ell$ we will define two protocols such that, if we have $k$ faulty agents, the agents produce a schedule of makespan at most $\ell +k$. 
    
Hereafter, we assume that, at the start of each turn any agent can decide to delay-$1$. 
This decision must happen simultaneously by all agents at the start of a turn, before any agent starts moving. In some sense, we break up a turn into two parts: the \textit{decision} and the \textit{action} phases. During the decision phase of a turn, the agents decide whether they perform a delay-$1$ operation, and during the action phase they move according to the updated schedule. 
    

\subsection{Check Before Moving}
In this section we propose our first distributed protocol for updating a schedule after a malfunction has occurred. 

\paragraph{The \protocolOneShort protocol.} Let $\mathcal{I}=\langle G, A, s_0, t, \ell\rangle$ be an instance of \MAPFShort{} and $\sigma$ be a feasible schedule of $\mathcal{I}$. 
We start with the \protocolOne (\protocolOneShort) protocol, which addresses the case where one agent malfunctions once. Before we do that, we need to introduce the following notion. We say that an agent $a\in A$ (who follows $\sigma)$ considers a vertex $v\in V$ as \textit{healthy during a turn $i$} if:

\begin{itemize}
    \item there exists an agent $b\in A$ such that $s_{i}(b)= v$, $s_{i+1}(b)\neq v$ and $b$ did not decide to perform a delay-$1$ operation during the decision phase of $i$, or if
    \item the previous is false, $v$ is unoccupied during the decision phase of $i$ and no agent $b\in A$ decides to occupy $v$ during the decision phase of $i$.
\end{itemize}
In short, an agent views a vertex as healthy if it is meant to access it, and it is able to do so. 

The \protocolOneShort protocol introduces an additional \textit{modification} phase between the decision and action phases. It also assigns to each agent a state, which can be either \textit{delayed} or \textit{on-time}. Every agent is initially on-time and switches their state to delayed as soon as they perform their first delay-$1$ operation. Let $a$ be an agent of $A$. For each $0\leq i\leq \ell$: 

\begin{enumerate}
    \item if the vertex $v=s_{i}(a)$ is healthy during turn $i$, then the modification phase for agent $a$ is empty and $a$ moves according to its current schedule and its decision during the decision phase.
    \item  Otherwise, the vertex $v=s_{i}(a)$ is not healthy during turn $i$. That is, there exists an agent $b\in A$ such that $s_{i}(b)= v$. Then:
    \begin{enumerate}
        \item If $a$ is on-time, they perform a delay-$1$ operation during the modification phase, stay put during the action phase and are marked as delayed for the rest of the schedule. From this point onward $a$ follows the updated schedule $\sigma'$.
        \item If $a$ is delayed and $b$ is on-time, then $a$ moves according to $\sigma'$ (and $b$ performs a delay-$1$ operation).
        \item If both $a$ and $b$ are delayed, an agent is chosen arbitrarily among the agents $a$ and $b$ to perform an additional delay-$1$ operation, while the other agent moves according to $\sigma'$.
    \end{enumerate}
\end{enumerate}

Our main result is that following the \protocolOneShort protocol results in a feasible schedule which, moreover, is efficient in terms of makespan. 

\begin{theorem}\label{thm:delay-one}
    Let $\mathcal{I}=\langle G, A, s_0, t,\ell\rangle$ be an instance of \MAPFShort{} and $\sigma$ be feasible schedule of $\mathcal{I}$ of length $\ell$. Assume that an adversary forces an agent to perform a malfunction-$1$ operation. If every agent follows the \protocolOneShort protocol, then the resulting schedule $\sigma'$ is feasible, includes the malfunction-$1$ operation imposed by the adversary and is of length $\ell+1$. 
\end{theorem}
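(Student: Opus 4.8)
The plan is to establish the three conclusions (feasibility, inclusion of the malfunction, and length $\ell+1$) all at once from a single structural invariant maintained throughout the run. I would label each agent at each time $t$ as \emph{on-time} if it occupies $s_t(\cdot)$ and \emph{delayed} if it occupies $s_{t-1}(\cdot)$, i.e.\ the vertex that $\sigma$ prescribes for it one turn earlier, and then prove by induction on the turns following the forced malfunction the invariant: \emph{every agent is either on-time or delayed, and the configuration at time~$t$ is collision-free}. The value of this invariant is that it pins each agent's position down purely in terms of $\sigma$: the on-time agents occupy a subset of the injective configuration $s_t$, the delayed agents a subset of the injective configuration $s_{t-1}$. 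In particular no agent is ever more than one step behind its original trajectory, which is exactly what the makespan bound requires.

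For the base case I would treat the adversary's malfunction as the malfunctioning agent $a^\ast$ choosing to delay-$1$ in the decision phase: it remains on $s_{t-1}(a^\ast)$ and becomes delayed while the others are still on-time. The inductive step is the core of the proof. Assuming the invariant at time~$t$, I would analyse turn $t+1$ by classifying conflicts according to the labels of the two contending agents. Two on-time agents target $s_{t+1}(\cdot)$, which are distinct since $\sigma$ is feasible; two delayed agents target $s_t(\cdot)$, again distinct by feasibility of $\sigma$. Hence the only pair that can contend for one vertex is an on-time $a$ and a delayed $b$ whose common target is simultaneously $s_{t+1}(a)$ and $s_t(b)$. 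The protocol then forces the on-time agent to yield (it finds its target not healthy because the delayed agent is about to occupy it, and being on-time it applies case~2(a)): $a$ delays-$1$ and becomes delayed, staying on $s_t(a)=s_{(t+1)-1}(a)$, while $b$ advances to $s_t(b)=s_{(t+1)-1}(b)$. Every non-conflicting on-time agent advances to $s_{t+1}(\cdot)$ and every other delayed agent advances to $s_t(\cdot)$, so the invariant is restored at time $t+1$.

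The crucial point, and the step I expect to be the main obstacle, is to show that case~2(c) (two \emph{delayed} agents in conflict, which would force a \emph{second} delay on one of them) never fires; a second delay would push an agent two steps behind and break both the invariant and the bound. Here the invariant closes the loop cleanly: the transition executed by the delayed agents is precisely the restriction of the $\sigma$-transition $s_{t-1}\to s_t$, so among themselves the delayed agents inherit the collision-freeness of $\sigma$ and never contend for a common vertex, neither as a shared target nor by one trying to enter a vertex another is vacating. I would also verify the no-swap condition is respected: an on-time/delayed swap would require $s_t(a)=s_t(b)$ and hence $a=b$, while on-time/on-time and delayed/delayed pairs inherit swap-freeness directly from $\sigma$. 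Consequently case~2(c) is vacuous, case~2(b) is just the same resolution seen from the delayed agent's side, and each agent incurs at most one protocol-induced delay-$1$ over the whole execution.

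Finally I would conclude. Feasibility is immediate from the per-turn collision-freeness in the invariant; the malfunction is included by construction of the base case; and since $\sigma$ delivers every agent to its target by time $\ell$ and the invariant keeps each agent at most one step behind, $\sigma'$ delivers every agent by time $\ell+1$, so after the amortization of the preliminaries its length is $\ell+1$. This bound is tight, matching the lower bound noted in the introduction when the delayed agent starts at distance $\ell$ from its target.
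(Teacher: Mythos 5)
Your proof is correct, and it rests on the same two facts that drive the paper's proof: injectivity of the configurations $s_t$ of the feasible schedule $\sigma$ rules out any conflict between two on-time agents or between two delayed agents, and the protocol resolves every mixed conflict by making the on-time agent yield. The route you take is nevertheless organized quite differently. The paper gives two separate minimal-counterexample arguments: it considers the first turn at which a collision occurs and shows the protocol would have prevented it, and then considers the first turn at which an already-delayed agent would delay again (case 2(c)) and derives $s_{i-1}(a)=s_{i-1}(b)$, contradicting feasibility of $\sigma$. You instead run one forward induction on an explicit invariant --- every agent sits on $s_t(\cdot)$ or on $s_{t-1}(\cdot)$, and the configuration is collision-free --- from which feasibility, the vacuity of case 2(c), and the $\ell+1$ bound all follow at once. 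Your decomposition buys some robustness that the paper's write-up lacks: the paper's feasibility argument asserts that at the first collision ``no agent has delayed-$1$ except for'' the malfunctioning agent, which overlooks cascaded, protocol-induced delays (an agent delayed by the protocol can later meet an on-time agent with the malfunctioning agent nowhere nearby); your label-based case analysis covers such collisions uniformly. You also verify swap-freeness explicitly, which the paper omits. The one caveat is that your mixed-conflict step reads the healthy-vertex definition as ``unhealthy whenever another agent is about to enter,'' even when the current occupant is simultaneously vacating the vertex; the literal first bullet of the paper's definition would declare such a vertex healthy for both contenders, but since the paper's own proof adopts the same reading as you do, this is an ambiguity of the protocol specification rather than a gap in your argument.
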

\begin{proof}
We will first show that the resulting schedule $\sigma'=(s'_0,\dots,s'_\scheduleLength)$ is feasible. That is, $\sigma'$ is collision-free and $s_\scheduleLength(a)=t(a)$ for every $a\in A$. 

Let us assume that the adversary decides to force the malfunction-$1$ operation upon an agent $a\in A$ during the turn $i$, and let $\sigma'$ be the updated schedule. Note that $a$ is marked as delayed for any turn after this point. We consider what happens after the turn $i$. Let $j\geq i$ be the first turn such that there exists an agent $b\in A$ with $s'_j(b)=s'_j(a)=u$. We know that $a$ is necessarily one of the agents that would participate in such a collision. Indeed $j$ is the first turn after $i$ where a collision occurs, $\sigma$ is feasible, and since the agents are following the \protocolOneShort protocol, no agent has delayed-$1$ except for $a$. We distinguish two cases: 
\begin{itemize}
    \item $j=i$. Then $s_{i-1}(a)=u$, $s_i(a)\neq u$ and $s_i(b)=u$, but $a$ was forced to delay-$1$ by the adversary during the decision phase of turn $i$. Then $u$ is unhealthy during turn $i$, and, according to \protocolOneShort, agent $b$ performs a delay-$1$ operation (since they are on-time). This is a contradiction. 
    \item $j>i$. Since $b$ has not performed any delay-$1$ operations, it is on-time and it follows the initial schedule $\sigma$. So, during the modification phase of turn $j$, the agent $b$ notices that although $s_j(u)$ should be empty, the agent $a$ decided to move to $u$ during the decision phase of turn $j$. Thus, $b$ does not view $u$ as a healthy vertex during turn $j$ and performs a delay-$1$ operation during the modification phase of turn $j$. This is a contradiction.  
\end{itemize}
In any case, it is shown that the turn $j$ doesn't exist. Thus, there is no collision in the schedule $\sigma'$. Also, and by the definition of the delay-$1$ operation, it follows that $\sigma'_\scheduleLength(a)=t(a)$ for every $a\in A$.

We will now show that $\scheduleLength=\ell+1$, for which it suffices showing that $\scheduleLength\ \leq \ell+1$. To achieve this, we will show that no agent will perform more than one delay-$1$ operations. Assume that this claim is false, let $a\in A$ be the first agent that performs a delay-$1$ operation while they are already delayed, and lets say this happens during turn $i$. Observe first that in this case, the adversary has already done their move and forced one vertex to malfunction-$1$, as otherwise no agent would be delayed. Lets analyze this situation. According to the \protocolOneShort protocol, they only way for $a$ to perform a delay-$1$ operation while already marked as delayed, is if there also exists another agent $b\in A$ that is delayed, and both $a$ and $b$ want to access the same vertex at turn $i$. Since the turn $i$ is the first time that this happens, both agents $a$ and $b$ have only been delayed once. That is, $s'_i(a)=s_{i-1}(a)$ and $s'_i(b)=s_{i-1}(b)$. Since they both want to access the same vertex at turn $i$, it follows that $s_{i-1}(a)=s_{i-1}(b)$, which contradicts the feasibility of $\sigma$. 
\end{proof}
Before we move on, note that the overall delay introduced by the agents following the \protocolOneShort protocol is, in essence, optimal. Indeed, it suffices for the adversary to force a malfunction-$1$ operation on an agent whose starting vertex is a distance $\ell$ from its target. Then the resulting schedule will necessarily be of length at least $\ell+1$. 

\paragraph{Upgrading \protocolOneShort.}
The \protocolOneShort protocol deals with the case where the adversary has the potential of forcing only one malfunction upon the agents. We can use this protocol as a building block to construct the \protocolTwo (\protocolTwoShort) protocol, which deals with the case where the adversary can impose $k$ malfunction-$1$ operations on the agents of $A$. We also give to the adversary the power to decide how to distribute these $k$ malfunctions; all of them can be forced upon one single agent, spread out among $k$ agents, or anything in-between. 

The main point where \protocolTwoShort upgrades upon \protocolOneShort is that it assigns a more refined set of states to the agents. In particular, each agents $a\in A$ has a function $d_a:[\scheduleLength]\rightarrow \{0,\dots,k\}$ which returns how many delay-$1$ operations the agent has performed up to the requested turn. That is, $d_a(0)=0$ for every $a\in A$, and if $d_a(i)=j$ for some turn $i$ and $a$ delays-$1$ in turn $i$, then $d_a(i+1)=j+1$. Then, when two agents want to access the same vertex at the same turn, we give the priority to the agent with the smaller value of $d$ and have the other agent delay-$1$. 

The main issue with the \protocolTwoShort protocol is that it would also require an upgraded version of the notion of healthy vertex. In a sense, if an agent wanted to safely access a vertex $u$, then they should check the states of all vertices that are at distance $k$ from $u$. We believe that it would be unrealistic to require such computations from the agents. This leads us to the protocol proposed in the next section.





\iflong

\section{Check Counter Before Moving}

In this section we consider a decision making protocol that tries to rectify the shortcomings of \protocolOneShort{} by deploying a mechanism on each vertex that counts the number of agents that have occupied this vertex up to this turn, including any agent that is currently occupying it.

\paragraph{The \protocolThreeShort{} protocol.}
Let $\mathcal{I}=\langle G, A, s_0, t, \ell\rangle$ be an instance of \MAPFShort{} and $\sigma$ be a feasible schedule of $\mathcal{I}$. 
Before we present the \protocolThree (\protocolThreeShort), we need to formally introduce the following notion. First, each vertex $v\in V(G)$ has its counting function $c_v:[\scheduleLength] \rightarrow \{0,\ldots,k\}$, where $\scheduleLength$ is the length of $\sigma$ and $k=|A|$. For a turn $i$, the function $c_v(i)$ returns the number of agents that have passed through $v$ by (and including) turn $i$. Also, each agent $a$ contains a list $(l^a(s_0),l^a(s_1),\dots,l^a(s_\scheduleLength))$ where for each $i\in[\scheduleLength]$, the element $l^a(s_i)$ is equal to $c_{s_i(a)}(i)$, i.e., how many agents have passed through $s_i(a)$ by turn $i$. 

In the \protocolThreeShort protocol, each agent follows its schedule while respecting the counting of the destination vertex as well as any delays introduced. We say that the agent $a\in A$ \textit{respects} the schedule $\sigma=(s_0(a),\dots,s_\scheduleLength(a))$ during a turn $i$ if:
\begin{itemize}
    \item $s_i(a)=v$,
    \item $s_{i+1}(a)=u$ with $l^a(s_{i+1})=c$,
    \item $c_u(i+1)=c$,
    \item $u$ is free at the beginning of turn $i+1$.   
\end{itemize}
If the above are satisfied, then $a$ moves to $u$. 
In any other case, i.e., if the counter of $u$ has not reached the required number or $u$ is occupied at turn $i+1$, then $a$ performs a delay-$1$ operation.

Our main result here is that following the \protocolThreeShort protocol does not introduce more delays than the number of malfunctions that occur.

\begin{theorem} \label{thm:protocol2}
    Let $\mathcal{I}=\langle G, A, s_0, t,\ell\rangle$ be an instance of \MAPFShort{} and $\sigma$ be feasible schedule of $\mathcal{I}$ of length $\ell$. Assume that an adversary forces $k$ malfunction-$1$ operations upon the agents of $A$. If every agent follows the \protocolThreeShort protocol, then the resulting schedule is feasible, includes the malfunctions imposed by the adversary and is of length $\ell+k$. 
\end{theorem}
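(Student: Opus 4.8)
The plan is to mirror the two-part structure of the proof of Theorem~\ref{thm:delay-one}: first establish feasibility (collision-freeness together with every agent reaching its target), then bound the makespan increase by $k$. The key invariant I would isolate is that the counting function $c_v$ acts as a synchronization token: an agent $a$ is only permitted to enter $u=s_{i+1}(a)$ once $c_u$ has reached the exact value $l^a(s_{i+1})$ it recorded from the original feasible schedule $\sigma$. Since $\sigma$ is feasible, the vertices are visited in $\sigma$ in a well-defined order, and $l^a(s_{i+1})$ encodes precisely the rank of $a$ in that visiting order of $u$. The first thing I would prove is a lemma stating that the relative order in which agents traverse any fixed vertex $u$ is identical under \protocolThreeShort{} to the order prescribed by $\sigma$. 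This follows because an agent waits (performs a delay-$1$) exactly until the counter hits its prescribed value, so no agent can ``overtake'' another on a shared vertex; the counters enforce the original precedence.

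For feasibility I would argue collision-freeness by contradiction. Suppose two agents $a,b$ occupy the same vertex $u$ at some turn $j$ of $\sigma'$. The entry conditions of \protocolThreeShort{} require $u$ to be free at the beginning of the turn in which an agent moves in, which immediately rules out two agents entering simultaneously; and the ordering lemma rules out one agent entering while the other still occupies $u$, since the second agent's required counter value is strictly larger and cannot be met until the first has departed and incremented $c_u$. For the target-reaching part, I would show that no agent ever deadlocks, i.e., every agent eventually satisfies its entry condition. The crucial point is that in $\sigma$ the agent occupying $u$ immediately before $a$'s scheduled entry does eventually leave $u$ (feasibility of $\sigma$), and by induction on the turns this departure happens under \protocolThreeShort{} as well, so $c_u$ eventually reaches $l^a(s_{i+1})$ and $a$ proceeds. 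This inductive ``no-deadlock'' argument is where I expect the main difficulty to lie, because one must rule out cyclic waiting: a set of agents each blocked waiting for a counter that only the next agent in the cycle can advance. I would resolve this by appealing to the ordering lemma to show that the wait-for relation respects the time-indexed precedence of $\sigma$, which is acyclic because $\sigma$ itself is a valid time-ordered schedule.

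For the makespan bound, I would track a potential argument measuring the total ``lag'' of the system relative to $\sigma$. Each malfunction-$1$ forced by the adversary injects exactly one unit of delay into one agent. The claim to prove is that \protocolThreeShort{} never creates delay out of nothing: every delay-$1$ an agent performs is ultimately chargeable to a malfunction. Concretely, I would argue that the total number of delay-$1$ operations performed across all agents over the whole run equals exactly $k$, the number of malfunctions, by showing that a ``voluntary'' delay (one triggered by an unmet counter or an occupied target) propagates a pre-existing delay rather than generating a new one. The counter $c_u$ being behind its $\sigma$-schedule value at some turn is itself a consequence of some earlier agent having been delayed, so one can set up a charging scheme or an induction on turns showing that the maximum lag over all agents is at most $k$ at every turn, hence the last agent finishes by turn $\ell+k$. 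The lower bound $\ell+k$ is realized exactly as in the \protocolOneShort{} case, by concentrating all $k$ malfunctions on a single agent starting at distance $\ell$ from its target, so the length is exactly $\ell+k$.

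The step I expect to be the genuine obstacle is the no-deadlock / acyclicity argument within feasibility, since the counter mechanism is purely local and one must carefully argue globally that local waiting never closes into a cycle; the charging argument for the makespan, by contrast, should follow once the ordering lemma is in hand.
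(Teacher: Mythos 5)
Your overall architecture (counters enforce the per-vertex visiting order of $\sigma$; prove feasibility, then a per-agent lag bound) matches the paper's intent, but two of your key steps are genuinely flawed. First, the no-deadlock argument: you propose to rule out cyclic waiting because ``the wait-for relation respects the time-indexed precedence of $\sigma$, which is acyclic.'' It is not. The model forbids swaps but permits \emph{rotations}: in $\sigma$, a set of agents arranged on a cycle of length at least $3$ may all move one step around that cycle in the same turn. In such a configuration each agent's target vertex is occupied at the start of the turn by the next agent of the cycle, and each occupant's departure happens at the \emph{same} $\sigma$-turn as its successor's entry, so your time-indexed precedence degenerates into equalities around the cycle --- there is no strict order to contradict, and the wait-for relation genuinely contains cycles. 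The paper's proof confronts exactly this situation: it follows the chain of blocked agents, shows each agent in the chain has the same delay count and is scheduled (in $\sigma$) to move, and then observes that the chain must either terminate (contradiction, since the last agent is unobstructed) or close into a cycle containing no malfunctioning agent, in which case \emph{all} agents of the cycle can execute their moves simultaneously and none actually needs to delay. The resolution is the executability of the rotation, not acyclicity; as stated, your argument fails on any $\sigma$ containing a rotation.

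Second, your makespan accounting rests on a false claim: that ``the total number of delay-$1$ operations performed across all agents over the whole run equals exactly $k$.'' Delays cascade across agents: a single malfunction on agent $a$ forces the agent scheduled to enter $a$'s current vertex to delay, which forces the agent behind it to delay, and so on, so one malfunction can generate a number of delay-$1$ operations proportional to $|A|$. The quantity that is bounded by $k$ --- and the only one the theorem needs --- is the \emph{per-agent} count: no single agent ever performs more than $k$ delay-$1$ operations. You do name this correct invariant (``maximum lag over all agents is at most $k$ at every turn''), but you conflate it with the false global count, and a charging scheme aimed at total $=k$ cannot be completed. The paper proves the per-agent bound by a double induction (outer on the number of malfunctions forced so far, inner on turns), and the inductive step is precisely the chain/cycle argument above: an agent that has already absorbed all forced malfunctions and still ``wants'' to delay would sit in a wait-for chain that either terminates or closes into a movable rotation, a contradiction either way. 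So the feasibility and makespan halves run on the same engine, and fixing the cycle case is the one repair that makes both of your steps go through.
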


\begin{proof}
    Assume that we are in any the start of turn $i < \ell+k$ and that there are at most $k'\le k$ malfunction-$1$ operations that have been imposed, including the ones announced during the start of this turn. It suffices to show that any agent that needs to delay-$1$ in turn $i$ has not already delayed-$1$ more than $k'$ times (including the one just announced). 

    We begin with the case $k'=1$. Note that according to \protocolThreeShort{} no delays are performed before the first malfunctions. We deal with this case by induction on $i$.

    \textbf{Base Case:} Let $i$ be the turn when the adversary forces its one malfunction-$1$.
    Until $i$, no agent was delayed. Thus, the only reason an agent would delay-$1$ during the turn $i$ is if it is the one upon which the malfunction is forced. Therefore, no agent has delayed-$1$ more than once.

    \textbf{Induction Hypothesis:} Assume that after $i$ turns, no agent has introduce a second delay.

    \textbf{Induction Step:} We will show that if any agent introduce a delay after $i+1$ turns, then this is its first delay-$1$. 
    In particular, we will show that any agent that has already delayed-$1$ has no reason to delay-$1$ once more. Consider such an agent. Since $k'=1$, there is no malfunctioning agent this turn. Therefore, any agent $a$ delays-$1$ only if it is planning to move to a vertex $u$ that is occupied by an agent $b$ who is not planning to move this turn.
    Note that $b$ must already have delayed-$1$ as, otherwise, in the original schedule $\sigma$ the agents $a$ and $b$ would occupy $u$ in the same turn, contradicting the feasibility of $\sigma$. 
    Also, since both $a$ and $b$ have delayed-$1$ and $a$ is supposed to move this turn, $b$ is also supposed to move, otherwise they would collide according to $\sigma$. 
    Therefore, we can assume that $b$ is supposed to move to a vertex occupied by an agent that is not planning to move this turn. By applying the same arguments, we can create a sequence of agents that have already delayed-$1$ and cannot move. Since we do not have infinite agents this means that we have created a cycle (i.e. we go through all the agents until we reach $a$ once more). This contradicts the assumption that none of them can move, finishing the proof for the case $k'=1$. 

    We will now prove the claim for a general $k'\leq k$. We show this by induction on $k'$. Note that the base case is already treated above. The new induction hypothesis is as follows:
    
    \textbf{Induction Hypothesis:} If up to (and including) turn $i$ there are at most $k'$ malfunction-$1$ operations that have been forced, then no agent needs to perform more than $k'$ delay-$1$ operations.

    \textbf{Induction Step:} We will prove that, if up to (and including) turn $i$ there are at most $k'+1$ malfunction-$1$ that have been forced, then no agent needs to introduce more than $k'+1$ delay-$1$ operations.

    Let $j$ be the turn when the last malfunction-$1$ operations were performed, and assume there were $k''$ many of them on turn $j$. By the induction hypothesis, up to turn $j-1$, no agents had delayed-$1$ more than $\kappa = k'-k''+1$ times. We will show that for any turn $j'\ge j$, no agent will perform more than $\kappa+1$ delay-$1$ operations.
    We prove this claim by induction on the turns $j'\ge j$.
        
    \textbf{Base Case:} $j'=j$:        
    Since until $j-1$, no more than $\kappa$ delay-$1$ operations have been performed, regardless of which agents needs to delay-$1$, no agent performs more than $\kappa+1$ delay-$1$ operations by turn $j'$.
        
    \textbf{Induction Hypothesis:} Assume that, up to turn $i'\ge j$, no agent has performed more than $\kappa+1$ delay-$1$ operations.

    \textbf{Induction Step:} We will prove that in turn $i'+1$ no agent needs to delay-$1$ more than $\kappa+1$ times. To do so, consider an agent $a$ that has performed $\kappa+1$ delay-$1$ operations up until turn $i$ and assume that $a$ needs to delay-$1$ during the turn $i'+1$. Since there is no new malfunctioning agent, this happen only if, based on the original schedule $\sigma$, the agent $a$ needs to move to a vertex $u$ occupied by an agent $b$ and $b$ want to stay on $u$ during the turn $i'+1$. 
    It is easy to see that $b$ has also performed $\kappa+1$ delay-$1$ operations up to turn $i$. Indeed, if this is not true, then $b$ should visit $u$ after the agent $a$ (according to $\sigma$), or collide with $a$ on $u$. This contradicts the fact that $b$ is currently occupying $u$, which means that the counter in $u$ was the appropriate one when $b$ moved on $u$ (which is impossible without $a$ occupying $u$ before $b$). By the feasibility of $\sigma$, the agent $b$ is supposed to move out of $u$ during the turn $i'+1$, as otherwise it would collide with $a$ in $\sigma$. Therefore, there exists an agent $\gamma$ that occupies the vertex $w$ that $b$ wants to move to at turn $i'+1$, and $\gamma$ is not planning to move out of $w$, forcing $b$ to stay put. By repeating this argument we either run out of agents or create a cycle (we end up to the agent $a$). In the first case, no agent has any reason to delay-$1$. The same is true for the second case as none of the agents appearing in this cycle are malfunctioning. Therefore all of them can move without delaying. This also finishes the induction for the general case of $k'\leq k$.

    Thus, for any $k$ malfunction-$1$ operations that are forced upon the agents, no agent needs to delay-$1$ more than $k$ times. Therefore, all the agents reach their destinations by turn $\ell+k$ in the latest. 
\end{proof}
  
\fi
\ifshort
\input{communication_via_counters-short}
\fi

\section{Conclusion}
In this paper, we formally model the \MAPFShort{} problem where agents are prone to malfunction. The main take-away message is that adapting a given schedule in a centralized manner is infeasible. Thus, we need different approaches. In this work, we propose two distributed protocols that produce feasible schedules of the best possible makespan, assuming a worse case scenario where a powerful adversary is forcing the malfunctions. However, this paper also serves as an open invitation for different approaches to be explored. Is there some heuristic that circumvents the hardness shown in Theorem~\ref{thm:keeping:the:same:makespan:is:NP:hard}? What if the malfunctions were not performed by an adversary, but followed some random distribution? These are just two of the questions that arise from our work. 

\bibliographystyle{abbrv}
\bibliography{biblio}

\begin{thebibliography}{10}

\bibitem{amigoni2022}
F.~Amigoni, D.~Azzalini, N.~Basilico, B.~Flammini, et~al.
\newblock Mapf and mapd: Recent developments and future directions.
\newblock In {\em Proceedings of the AI* IA Workshop on Artificial Intelligence
  and Robotics (AIRO)}, pages 1--6, 2022.

\bibitem{AtzmonSFSK20}
D.~Atzmon, R.~Stern, A.~Felner, N.~R. Sturtevant, and S.~Koenig.
\newblock Probabilistic robust multi-agent path finding.
\newblock In {\em Proceedings of the International Conference on Automated
  Planning and Scheduling}, volume~30, pages 29--37, 2020.

\bibitem{AtzmonSFWBZ20}
D.~Atzmon, R.~Stern, A.~Felner, G.~Wagner, R.~Bart{\'{a}}k, and N.~Zhou.
\newblock Robust multi-agent path finding and executing.
\newblock {\em J. Artif. Intell. Res.}, 67:549--579, 2020.

\bibitem{Bonet10}
B.~Bonet.
\newblock Conformant plans and beyond: Principles and complexity.
\newblock {\em Artificial Intelligence}, 174(3):245--269, 2010.

\bibitem{BredereckCKLN20}
R.~Bredereck, J.~Chen, D.~Knop, J.~Luo, and R.~Niedermeier.
\newblock Adapting stable matchings to evolving preferences.
\newblock In {\em The Thirty-Fourth {AAAI} Conference on Artificial
  Intelligence, {AAAI} 2020, The Thirty-Second Innovative Applications of
  Artificial Intelligence Conference, {IAAI} 2020, The Tenth {AAAI} Symposium
  on Educational Advances in Artificial Intelligence, {EAAI} 2020, New York,
  NY, USA, February 7-12, 2020}, pages 1830--1837. {AAAI} Press, 2020.

\bibitem{chaplick2014contact}
S.~Chaplick, P.~Dorbec, J.~Kratochv{\'\i}l, M.~Montassier, and J.~Stacho.
\newblock Contact representations of planar graphs: Extending a partial
  representation is hard.
\newblock In {\em International Workshop on Graph-Theoretic Concepts in
  Computer Science}, pages 139--151. Springer, 2014.

\bibitem{chaplick2018partial}
S.~Chaplick, G.~Gu{\'s}piel, G.~Gutowski, T.~Krawczyk, and G.~Liotta.
\newblock The partial visibility representation extension problem.
\newblock {\em Algorithmica}, 80(8):2286--2323, 2018.

\bibitem{CimattiRB04}
A.~Cimatti, M.~Roveri, and P.~Bertoli.
\newblock Conformant planning via symbolic model checking and heuristic search.
\newblock {\em Artificial Intelligence}, 159(1):127--206, 2004.

\bibitem{FKKMO24}
F.~Fioravantes, D.~Knop, J.~M. Kristan, N.~Melissinos, and M.~Opler.
\newblock Exact algorithms and lowerbounds for multiagent path finding: Power
  of treelike topology.
\newblock In M.~J. Wooldridge, J.~G. Dy, and S.~Natarajan, editors, {\em
  Thirty-Eighth {AAAI} Conference on Artificial Intelligence, {AAAI} 2024,
  Thirty-Sixth Conference on Innovative Applications of Artificial
  Intelligence, {IAAI} 2024, Fourteenth Symposium on Educational Advances in
  Artificial Intelligence, {EAAI} 2014, February 20-27, 2024, Vancouver,
  Canada}, pages 17380--17388. {AAAI} Press, 2024.

\bibitem{FKKMOV25}
F.~Fioravantes, D.~Knop, J.~M. Kristan, N.~Melissinos, M.~Opler, and T.~A. Vu.
\newblock Solving multiagent path finding on highly centralized networks.
\newblock In T.~Walsh, J.~Shah, and Z.~Kolter, editors, {\em AAAI-25, Sponsored
  by the Association for the Advancement of Artificial Intelligence, February
  25 - March 4, 2025, Philadelphia, PA, {USA}}, pages 23186--23193. {AAAI}
  Press, 2025.

\bibitem{GJ75}
M.~R. Garey and D.~S. Johnson.
\newblock Complexity results for multiprocessor scheduling under resource
  constraints.
\newblock {\em SIAM Journal on Computing}, 4(4):397--411, 1975.

\bibitem{garg2002elements}
V.~K. Garg.
\newblock {\em Elements of distributed computing}.
\newblock John Wiley \& Sons, 2002.

\bibitem{HoffmannB06}
J.~Hoffmann and R.~I. Brafman.
\newblock Conformant planning via heuristic forward search: A new approach.
\newblock {\em Artificial Intelligence}, 170(6):507--541, 2006.

\bibitem{KottingerGASL24}
J.~Kottinger, T.~Geft, S.~Almagor, O.~Salzman, and M.~Lahijanian.
\newblock Introducing delays in multi agent path finding.
\newblock In {\em Proceedings of the International Symposium on Combinatorial
  Search}, volume~17, pages 37--45, 2024.

\bibitem{kshemkalyani2011distributed}
A.~D. Kshemkalyani and M.~Singhal.
\newblock {\em Distributed computing: principles, algorithms, and systems}.
\newblock Cambridge University Press, 2011.

\bibitem{Li_Surynek_Felner_Ma_Kumar_Koenig_2019}
J.~Li, P.~Surynek, A.~Felner, H.~Ma, T.~K.~S. Kumar, and S.~Koenig.
\newblock Multi-agent path finding for large agents.
\newblock {\em Proceedings of the AAAI Conference on Artificial Intelligence},
  33(01):7627--7634, Jul. 2019.

\bibitem{ma2022graph}
H.~Ma.
\newblock Graph-based multi-robot path finding and planning.
\newblock {\em Current Robotics Reports}, 3(3):77--84, 2022.

\bibitem{MaKK17}
H.~Ma, T.~S. Kumar, and S.~Koenig.
\newblock Multi-agent path finding with delay probabilities.
\newblock In {\em Proceedings of the AAAI Conference on Artificial
  Intelligence}, volume~31, 2017.

\bibitem{maLM21}
Z.~Ma, Y.~Luo, and H.~Ma.
\newblock Distributed heuristic multi-agent path finding with communication.
\newblock In {\em 2021 IEEE International Conference on Robotics and Automation
  (ICRA)}, pages 8699--8705, 2021.

\bibitem{NebelK95}
B.~Nebel and J.~Koehler.
\newblock Plan reuse versus plan generation: a theoretical and empirical
  analysis.
\newblock {\em Artificial Intelligence}, 76(1):427--454, 1995.
\newblock Planning and Scheduling.

\bibitem{Pianpak19}
P.~Pianpak, T.~C. Son, P.~O. Toups~Dugas, and W.~Yeoh.
\newblock A distributed solver for multi-agent path finding problems.
\newblock In {\em Proceedings of the First International Conference on
  Distributed Artificial Intelligence}, DAI '19, New York, NY, USA, 2019.
  Association for Computing Machinery.

\bibitem{SharonSFS15}
G.~Sharon, R.~Stern, A.~Felner, and N.~R. Sturtevant.
\newblock Conflict-based search for optimal multi-agent pathfinding.
\newblock {\em Artif. Intell.}, 219:40--66, 2015.

\bibitem{stern2019}
R.~Stern.
\newblock Multi-agent path finding--an overview.
\newblock {\em Artificial intelligence: 5th RAAI summer school, dolgoprudny,
  Russia, July 4--7, 2019, tutorial lectures}, pages 96--115, 2019.

\end{thebibliography}

\end{document}